 \newtheorem{observation}[theorem]{Observation}
 \newtheorem{claim}[theorem]{Claim}
 \newtheorem{problem}[theorem]{Problem}
\newenvironment{proofof}[1]{\par \noindent \textit{Proof of #1.}
}{\hfill$\Box$\medskip}
\newcommand{\idcprob}{\textsc{Min Id Code}}
\newcommand{\scone}{\textsc{Set Cover$_1$}}
\newcommand{\discr}{\textsc{Discriminating Code} }
\newcommand{\id}{\gamma^{\textsc{ID}}}
\title{Identifying codes in hereditary classes of graphs and VC-dimension}
\author{
Nicolas Bousquet \footnotemark[2]\ \footnotemark[6]
\and
Aur\'elie Lagoutte \footnotemark[3]\ \footnotemark[7]
\and
Zhentao Li \footnotemark[4]\ \footnotemark[8]
\and
Aline Parreau \footnotemark[5]\ \footnotemark[9]
\and
St\'ephan Thomass\'e \footnotemark[3]\ \footnotemark[7]}
\begin{document}

\maketitle

\renewcommand{\thefootnote}{\fnsymbol{footnote}}

\footnotetext[2]{LIRMM, Universit\'e Montpellier 2, France}
\footnotetext[3]{LIP, UMR 5668 ENS Lyon - CNRS - UCBL - INRIA, Université de Lyon, France}
\footnotetext[4]{\'Ecole Normale Supérieure, Paris, France}
\footnotetext[5]{LIRIS, UMR 5205, Universit\'e Lyon 1 - CNRS, France}
\footnotetext[6]{Department of Mathematics and Statistics, McGill University and GERAD, Montr\'eal, Canada}
\footnotetext[7]{Partially supported by ANR Project \textsc{Stint} under \textsc{Contract ANR-13-BS02-0007}.}
\footnotetext[8]{Partially supported by a FQRNT B3 postdoctoral fellowship program.}
\footnotetext[9]{Partially supported by a FNRS post-doctoral grant at the University of Li{\`e}ge.}

\renewcommand{\thefootnote}{\arabic{footnote}}

\begin{abstract}

An identifying code of a graph is a subset of its vertices such that every vertex of the graph is uniquely identified by the set of its neighbours within the code.
We show a dichotomy for the size of the smallest identifying code in classes of graphs  closed under induced subgraphs.
Our dichotomy is derived from the VC-dimension of the considered class $\mathcal{C}$, that is the maximum VC-dimension over the hypergraphs formed by the {\bf closed} neighbourhoods of elements of $\mathcal{C}$. We show that hereditary classes with infinite VC-dimension have infinitely many graphs with an identifying code of size logarithmic in the number of vertices while classes with finite VC-dimension have a polynomial lower bound.

We then turn to approximation algorithms. We show that \idcprob\ (the problem of finding a smallest identifying code in a given graph from some class $\mathcal{C}$) is log-APX-hard for any hereditary class of infinite VC-dimension. For  hereditary classes of finite VC-dimension, the only known previous results show that we can approximate \idcprob\ within a constant factor in some particular classes, e.g. line graphs, planar graphs and unit interval graphs. We prove that \idcprob\ can be approximate within a factor $6$ for interval graphs. In contrast, we show that \idcprob\ on $C_4$-free bipartite graphs (a class of finite VC-dimension) cannot be approximated to within a factor of $c \log(|V|)$ for some $c>0$. 
\end{abstract}

\begin{keywords}
Identifying code, VC-dimension, Hereditary class of graphs, Approximation, Interval graph 
\end{keywords}

\begin{AMS}
05C69, 05C85, 05C62
\end{AMS}

\pagestyle{myheadings}
\thispagestyle{plain}
\markboth{Bousquet, Lagoutte, Li, Parreau, Thomass\'e}{IDENTIFYING CODES AND VC-DIMENSION}

\section{Introduction}

Let $G=(V,E)$ be a graph.
An \emph{identifying code} of $G$ is a subset $C$ of vertices
of $G$ such that, for each vertex $v \in V$, the set of vertices in $C$ at
distance at most~1 from $v$, is non-empty and uniquely identifies $v$.
In other words, for each vertex $v\in V(G)$, we have $N[v]\cap
C\neq\emptyset$ ($C$ is a {\em dominating set}) and for each pair $u, v \in V(G)$, we have $N[u]\cap C\neq
N[v]\cap C$ ($C$ is a {\em separating set}), where $N[v]$ denotes the closed neighbourhood of $v$ in
$G$ ($v$ and all its neighbours). We say that a set $X$ of vertices \emph{distinguishes} $u \in V(G)$ from $v \in V(G)$ if $N[u] \cap X \neq N[v] \cap X$. This concept was introduced in 1998 by Karpovsky, Chakrabarty and Levitin~\cite{KCL98} and has applications in various areas such as fault-diagnosis~\cite{KCL98}, routing in networks~\cite{BLT06} or analysis of RNA structures~\cite{Haynes06}. For a complete survey on these results, the reader is referred to  the online bibliography of Lobstein \cite{lob}.

Two vertices $u$ and $v$ are {\em twins} if $N[u]=N[v]$. 
The whole vertex set $V(G)$ is an identifying code if and only if $G$ is twin-free. Since supersets of identifying codes are identifying, an identifying code exists for $G$ if and only if it is twin-free.
A natural problem in the study of identifying codes is to find one of a minimum size. Given a twin-free graph $G$, the smallest size of an identifying code of $G$ is called the \emph{identifying code number} of $G$ and is denoted by $\id(G)$. The problem of determining $\id$ is called the \idcprob\ problem, and its decision version is NP-complete \cite{CHL03}.

Let $X\subseteq V$. We denote by $G[X]$ the graph induced by the subset of vertices $X$. In this paper, we focus on hereditary classes of graphs, that is classes closed under taking induced subgraphs. We consider the two following problems: finding good lower bounds and approximation algorithms for the identifying code number.

\subsection{Previous work}

In the class of all graphs, the best lower bound is $\id(G)\geq \log (|V(G)|+1)$, since all the vertices of the graphs have distinct non-empty neighbourhood within the code. Moncel~\cite{M06} characterized all graphs reaching this lower bound. As for approximation algorithms, the general problem \idcprob\ is known to be $\log$-APX-hard \cite{LT08,BLT06,S07}. In particular, there is no $(1-\varepsilon)\log(|V|)$-approximation algorithm for \idcprob. The problem \idcprob\ remains $\log$-APX-hard even in split graphs, bipartite graphs or co-bipartite graphs (complement of bipartite graphs)~\cite{F13}.

On the positive side, there always exists a $\mathcal{O}(\log{|V(G)|})$ approximation for \idcprob~\cite{S07}. Moreover, even if in the general case \idcprob\ is hard to evaluate, there exist several constant approximation algorithms for restricted classes of graphs, such as planar graphs~\cite{SlaterR84} or line graphs~\cite{FoucaudGNPV13}.

For the remainder of this article, $n$ denotes the number of vertices of $G$.
 Table~\ref{tab:sumup} gives an overview of the currently known results for some restricted  hereditary classes of graphs. The order of magnitude of all lower bounds are best possible (there are infinite families of graphs reaching the lower bounds). \idcprob\ for line graphs and planar graphs have a polynomial time constant factor approximation algorithm with the best known constant written in parenthesis.
From this table, we observe two behaviours: a class either

\begin{enumerate}
\item
  has a logarithmic lower bound on the size of identifying codes, and \idcprob\ is log-APX-hard in this class (for example split, bipartite, co-bipartite graphs), or
\item
 there is a polynomial lower-bound on $\id(G)$ and a constant factor approximation algorithm to compute $\id(G)$.
\end{enumerate}

\begin{table}[h]
\begin{center}
\begin{tabular}{|c|c|c|c|c|}
\hline
Graph class & Lower bound  & Complexity & Approximability & References\\  
\hline \hline
All graphs & $\Theta(\log(n))$ & NP-c & log-APX-hard & \cite{KCL98,LT08} \\  \hline
Chordal & $\Theta(\log(n))$ & NP-c & log-APX-hard & \cite{F13} \\  \hline
Split graphs & $\Theta(\log(n))$ & NP-c & log-APX-hard & \cite{F13} \\  \hline
Bipartite & $\Theta(\log(n))$ & NP-c & log-APX-hard & \cite{F13} \\  \hline
Co-bipartite & $\Theta(\log(n))$ & NP-c & log-APX-hard & \cite{F13} \\ \hline
Claw-free & $\Theta(\log(n))$ & NP-c & log-APX-hard & \cite{F13} \\ \hline
Interval & $\Theta(n^{1/2})$ & NP-c & open & \cite{FoucaudMNPV15-1,FoucaudMNPV15-2}\\  \hline
Unit interval & $\Theta(n)$ & open & PTAS & \cite{Fthese,FoucaudMNPV15-1}\\  \hline
Permutation & $\Theta(n^{1/2})$ & NP-c & open & \cite{FoucaudMNPV15-1, FoucaudMNPV15-2}\\ \hline
Line graphs & $\Theta(n^{1/2})$ & NP-c & APX(4)& \cite{FoucaudGNPV13}\\  \hline
Planar &$\Theta(n)$&NP-c & APX(7) & \cite{ACHL10,SlaterR84}\\ \hline
\end{tabular}
\caption{Known lower bounds on $\id(G)$ and approximability of $\id(G)$.}
\label{tab:sumup}
\end{center}
\end{table}

\subsection{Our results}

The aim of this paper is to shed some light on the validity of such a dichotomy for all classes of graphs using the VC-dimension of the class of graphs.

\paragraph{VC-dimension} Let $\mathcal H=(V, \mathcal E)$ be a hypergraph. A subset $X\subseteq V$ of vertices is {\em shattered} if for every subset $S$ of $X$, there is some hyperedge $e$ such that $e\cap X = S$. The {\em VC-dimension} of $\mathcal H$ is the size of the largest shattered set of $\mathcal H$. 
We define the \emph{VC-dimension of a graph} as the VC-dimension of the closed neighbourhood hypergraph of $G$ (vertices are the vertices of $G$ and hyperedges are the closed neighbourhoods of vertices of $G$), a classical way to define the VC-dimension of a graph (see~\cite{AlonB06,BousquetT12}).

By \emph{a shattered set of a graph $G$}, we mean a shattered set of the hypergraph of the closed neighbourhoods of $G$.
The VC-dimension of a class of graphs $\mathcal C$, denoted by $dim(\mathcal C)$, is the maximum of the VC-dimension of the graphs over $\mathcal C$. If it is unbounded, we say that $\mathcal C$ has \emph{infinite VC-dimension}.

\paragraph{Dichotomy for lower bounds}
First we will prove in Section~\ref{sec:dic} that there is indeed such a dichotomy on the minimum size of identifying codes: it is always either logarithmic or polynomial, where the exponent of the polynomial depends on the VC-dimension of the class of graphs. In particular, our theorem provides new lower bounds for graphs of girth at least $5$, chordal bipartite graphs, unit disk graphs and undirected path graphs. Moreover, these bounds are tight for interval graphs and graphs of girth at least $5$.

\paragraph{Approximation hardness} We then try to extend this dichotomy result for constant factor approximations. First, we show in Section~\ref{sec:approx} that \idcprob\ is $\log$-APX-hard for any hereditary class with a logarithmic lower bound. The proof essentially consists in proving that a hereditary class with infinite VC-dimension contains one of these three classes, for which \idcprob\ has been shown to be $\log$-APX-hard~\cite{F13}: the bipartite graphs, the co-bipartite graphs, or the split graphs. Unfortunately, the dichotomy does not extend to approximation since we show in Section~\ref{sec:approxC4} that $C_4$-free bipartite graphs have a polynomial lower bound on the size of identifying codes but \idcprob\ is not approximable to within a factor $c\log n$ for some $c>0$ (under some complexity assumption) in this class. Thus, a constant factor approximation is not always possible in the second case.
 
\paragraph{Approximation algorithm} Finally, in Section~\ref{sec:interval}, we conclude the paper with some positive result when the lower bound is polynomial by proving that there exists a $6$-approximation algorithm for interval graphs, a problem left open in~\cite{F13}. 

\medskip

The results obtained in this paper are detailed in Table~\ref{tab:results}.

\begin{table}
\begin{center}
\begin{tabular}{|c|c||c||c|c|}
\hline
Graph class & VC dim  & IC-lower bound & 
IC-approx \\
\hline \hline
Girth $\geq 5$ &  $2$ & $\Theta(n^{\frac{1}{2}})$ (opt,new) &open \\ \hline
Interval &2 & $\Theta(n^{\frac{1}{2}})$ (opt) & 6 (Thm.~\ref{thm:interval})\\  \hline
Chordal bipartite & 3  & $\Omega(n^{\frac{1}{3}})$ (new) & open\\
\hline
Unit disk & 3 &$\Omega(n^{\frac{1}{3}})$ (new) & open \\  \hline
$C_4$-free bipartite & 2 & $\Theta(n^{\frac{1}{2}})$ (opt,new) & no $c\log(n)$-approx (Thm. \ref{thm:C4bip}) \\ \hline
Undirected path &3 &$\Omega(n^{\frac{1}{3}})$(new) &open \\ \hline
\end{tabular}
\caption{Overview of the results obtained in this paper.}
\label{tab:results}
\end{center}
\end{table}

\vspace{0.5cm}

\noindent 

\section{Dichotomy for lower bound}\label{sec:dic}

Most of the results using VC-dimension consist in obtaining upper bounds. However, in the last few years, several interesting lower bounds have been obtained using VC-dimension, for instance in game theory (e.g.~\cite{DanielySG14,PapadimitriouSS08}). All these proofs consist in an application of a lemma, due to Sauer \cite{S72} and Shellah \cite{Sh72}, or one of its variants. 
Our result has the same flavour since we use this lemma to prove that the size of an identifying code cannot be too small if the VC-dimension is bounded. 
The \emph{trace} of a set $X$ on $Y$ is $X \cap Y$. By extension, \emph{the trace of a vertex $x$ on $Y$} is the intersection of $N[x]$ with $Y$.

\

\begin{lemma}[Sauer's lemma \cite{S72,Sh72}]\label{lem:sauer}
Let $\mathcal H = (V, \mathcal E )$ be an hypergraph of VC-dimension $d$. For every set $X \subseteq V$, 
the number of (distinct) traces of $\mathcal{E}$ on $X$ is at most $$\sum_{i=0}^{d}{|X| \choose i}\leq |X|^{d}+1.$$ 
\end{lemma}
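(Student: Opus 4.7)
The plan is to prove the first inequality by induction on $n = |X|$; the second inequality $\sum_{i=0}^d \binom{n}{i} \leq n^d + 1$ is a standard estimate I would obtain from $\binom{n}{i} \leq n^i/i!$ in a one-line calculation at the end. First, let $\mathcal{T} = \{e \cap X : e \in \mathcal{E}\}$ be the family of distinct traces of $\mathcal{E}$ on $X$. Any subset $Y \subseteq X$ shattered by $\mathcal{T}$ is automatically shattered by $\mathcal{E}$, so $\mathcal{T}$, viewed as a set system on $X$, has VC-dimension at most $d$. It therefore suffices to prove the slightly more general statement: every set system $\mathcal{F} \subseteq 2^X$ of VC-dimension at most $d$ satisfies $|\mathcal{F}| \leq \sum_{i=0}^d \binom{n}{i}$.

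For the inductive step, fix any $x \in X$ and write $X' = X \setminus \{x\}$. I would introduce two auxiliary set systems on $X'$:
\[
\mathcal{F}_1 = \{S \setminus \{x\} : S \in \mathcal{F}\}, \qquad \mathcal{F}_2 = \{S \subseteq X' : S \in \mathcal{F} \text{ and } S \cup \{x\} \in \mathcal{F}\}.
\]
Counting preimages of the map $S \mapsto S \setminus \{x\}$ from $\mathcal{F}$ onto $\mathcal{F}_1$ shows that each $T \in \mathcal{F}_1$ has exactly one preimage, unless both $T$ and $T \cup \{x\}$ lie in $\mathcal{F}$ (equivalently $T \in \mathcal{F}_2$), in which case it has two; hence $|\mathcal{F}| = |\mathcal{F}_1| + |\mathcal{F}_2|$.

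The system $\mathcal{F}_1$ inherits VC-dimension at most $d$. The key observation is that $\mathcal{F}_2$ has VC-dimension at most $d-1$: if $Y \subseteq X'$ is shattered by $\mathcal{F}_2$, then for every $T \subseteq Y$ there exists $S \in \mathcal{F}_2$ with $S \cap Y = T$, so both $S$ and $S \cup \{x\}$ belong to $\mathcal{F}$, realizing the traces $T$ and $T \cup \{x\}$ of $\mathcal{F}$ on $Y \cup \{x\}$; hence $Y \cup \{x\}$ is shattered by $\mathcal{F}$, forcing $|Y| \leq d - 1$. Applying the induction hypothesis to $\mathcal{F}_1$ and $\mathcal{F}_2$ on the $(n-1)$-element ground set $X'$ and combining via Pascal's identity $\binom{n-1}{i} + \binom{n-1}{i-1} = \binom{n}{i}$ collapses the two bounds into $\sum_{i=0}^d \binom{n}{i}$, closing the induction (the base case $n = 0$ is immediate). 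The main delicate point is the VC-dimension bound for $\mathcal{F}_2$, which is really the heart of the Sauer--Shelah decomposition; the rest is essentially a binomial bookkeeping exercise.
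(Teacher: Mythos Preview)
Your argument is the standard induction proof of the Sauer--Shelah lemma and is correct. There is nothing to compare it against, however: the paper does not prove Lemma~\ref{lem:sauer} at all, but merely states it with a citation to~\cite{S72,Sh72} and uses it as a black box in the proof of Theorem~\ref{thm:dic}. Your write-up simply supplies a proof the authors deliberately omitted.

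One minor quibble on the numerical tail. The estimate $\binom{n}{i}\le n^{i}/i!$ by itself does not yield $\sum_{i=0}^{d}\binom{n}{i}\le n^{d}+1$ in one line: summing $n^{i}/i!$ over $i\le d$ only gives a bound of order $e\cdot n^{d}$, not $n^{d}+1$. A genuinely one-line argument is the injection that sends each nonempty subset $\{a_{1}<\cdots<a_{i}\}$ of $[n]$ with $i\le d$ to the $d$-tuple $(a_{1},\ldots,a_{i},a_{i},\ldots,a_{i})\in[n]^{d}$; this shows $\sum_{i=1}^{d}\binom{n}{i}\le n^{d}$ directly, and adding the $i=0$ term gives the stated bound. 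This is cosmetic and does not affect the correctness of your main inductive argument.
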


Let us now prove the main result of this section.

\

\begin{theorem}\label{thm:dic}
For every hereditary class of graphs $\mathcal C$, either
\begin{enumerate}
\item for every $k\in \mathbb N$, there exists a graph $G_k\in \mathcal{C}$ with more than $2^k-1$ vertices and an identifying code of size $2k$, or
\item there exists $\varepsilon > 0$ such that no twin-free graph $G\in \mathcal C$ with $n$ vertices has an identifying code of size smaller than $n^\varepsilon$.
\end{enumerate}
\end{theorem}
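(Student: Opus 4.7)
I would split the argument according to whether $d := dim(\mathcal{C})$ is finite or infinite; the two sides of the dichotomy correspond exactly to these two regimes, with Sauer's lemma handling the finite case and an explicit shattering construction handling the infinite one.

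\emph{Case $d < \infty$, yielding case (2).} Let $G \in \mathcal{C}$ be twin-free on $n$ vertices and $C$ any identifying code of $G$. The separating property of an identifying code forces the $n$ closed-neighbourhood traces $\{N[v] \cap C : v \in V(G)\}$ to be pairwise distinct subsets of $C$. Sauer's lemma (Lemma~\ref{lem:sauer}), applied to the closed neighbourhood hypergraph of $G$ (whose VC-dimension is at most $d$) on the ground set $C$, bounds the number of such traces by $|C|^d + 1$. Therefore $n \leq |C|^d + 1$, which rearranges to $|C| \geq (n-1)^{1/d}$ and gives case (2) with $\varepsilon$ any constant strictly less than $1/d$ (and $n$ large enough so that $(n-1)^{1/d} \geq n^{\varepsilon}$).

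\emph{Case $d = \infty$, yielding case (1).} Fix $k$; by assumption some $G \in \mathcal{C}$ contains a shattered set $X = \{x_1, \ldots, x_k\}$. For each $S \subseteq X$ pick a witness $v_S \in V(G)$ with $N_G[v_S] \cap X = S$, using the canonical choice $v_S := x_j$ whenever some $x_j$ satisfies $N_G[x_j] \cap X = S$. The $2^k$ witnesses are pairwise distinct (their traces on $X$ differ). Set $U := X \cup \{v_S : S \subseteq X, S \neq \emptyset\}$ and $G_k := G[U] \in \mathcal{C}$ by heredity. The candidate code is $C := X \cup \{v_{\{x_i\}} : 1 \leq i \leq k\}$, of size at most $2k$. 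Separation is checked in three types of pair: two distinct witnesses $v_S, v_{S'}$ are separated by $X \subseteq C$ because their traces on $X$ differ; two distinct elements $x_i, x_j$ of $X$ are separated by $v_{\{x_i\}}$, whose trace on $X$ is $\{x_i\}$, so it lies in $N_{G_k}[x_i]$ but not in $N_{G_k}[x_j]$; and a mixed pair $(x_i, v_S)$ with $v_S \notin X$ is separated by $X$ because the canonical choice forces $S$ to differ from every $T_j := N_G[x_j] \cap X$. Domination of every $v_S$ with $S \neq \emptyset$ is immediate from any $x_i \in S \subseteq C$.

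\emph{Main obstacle.} The delicate point is to achieve $|V(G_k)| > 2^k - 1$ simultaneously with $|C| \leq 2k$, and both depend on how the witnesses $v_S$ collide with $X$. I would branch on whether the traces $T_1, \ldots, T_k$ are pairwise distinct. If they are, the canonical rule places every $x_j$ into $\{v_S : S \neq \emptyset\}$, so the $U$ above has only $2^k - 1$ vertices; in that subcase I adjoin $v_\emptyset$ to both $U$ and the code, obtaining $|U| = 2^k$ and a code $X \cup \{v_\emptyset\}$ of size $k+1 \leq 2k$ which identifies the graph because $X$ alone separates all pairs when the $T_j$'s are distinct and $v_\emptyset$ dominates itself. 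If on the contrary two of the $T_j$'s coincide, then at least one $x_j$ lies in $X \setminus \{v_S : S \neq \emptyset\}$, so $|U| \geq 2^k$ and the original code $X \cup \{v_{\{x_i\}}\}$ of size at most $2k$ works as described. This modest bookkeeping around $v_\emptyset$ is the only part requiring care; the shattering hypothesis together with Sauer's lemma do all the heavy lifting, delivering in $\mathcal{C}$ graphs of identifying code number $O(\log n)$ whenever the VC-dimension is unbounded, and a polynomial lower bound $n^{1/d}$ otherwise.
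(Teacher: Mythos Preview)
Your proof is correct and follows essentially the same approach as the paper: Sauer's lemma for the finite VC-dimension case, and in the infinite case the same shattered-set construction with witnesses $v_S$ chosen to lie in $X$ whenever possible (the paper phrases this as ``choose $Y$ so that $|X\cap Y|$ is maximized''), and the same code $X\cup\{v_{\{x_i\}}\}$. Your additional branching on whether the traces $T_j$ are pairwise distinct is in fact more careful than the paper's own argument, which only explicitly verifies $|V(G_k)|\geq 2^k-1$ rather than the strict inequality claimed in the statement; your use of $v_\emptyset$ in the first subcase cleanly repairs this.
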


\

\begin{proof}
Let $\mathcal C$ be an hereditary class of graphs. The class $\mathcal{C}$ either has finite or infinite VC-dimension.
First, suppose that $\mathcal C$ has infinite VC-dimension. We will show that $\mathcal C$ satisfies the first conclusion. By definition of infinite VC-dimension, there is a graph $H_k \in \mathcal C$ with VC-dimension $k$ for each $k$. So there exists a set of vertices $X$ of size $k$ of $H_k$ which is shattered. Let $Y$ be a set of $2^{k}-1$ vertices whose closed neighbourhoods have all possible traces on $X$ except the empty set, meaning that for every $X'\subseteq X$, add a vertex $y$ in $Y$ such that $N[y]\cap X=X'$. Choose $Y$ so that $|X\cap Y|$ is maximized.
Let $G_k=H_k[X\cup Y]$. The graph $G_k$ has at least $2^k-1$ vertices since $|Y| = 2^{k}-1$.
By choice of $Y$, $X$ dominates $X \cup Y$ and $X$ distinguishes every pair of vertices of $Y$. By maximality of $|X \cap Y|$, $X$ also distinguishes every vertex in $X$ from every vertex in $Y$ (otherwise a vertex of $Y$ would have the same neighbourhood in $X$ as a vertex $x\in X$ and thus can be replaced by $x$, contradicting the maximality of $|X \cap Y|$).
For each $x \in X$, the vertex $y_x \in Y$ whose closed neighbourhood intersects $X$ in exactly $\{x\}$ distinguishes $x$ from all vertices in $X-x$. So $X \cup \{y_x |x \in X\}$ is an identifying code of size at most $2k$, as required.

Now suppose that the VC-dimension of $\mathcal C$ is bounded by $d$. For any identifying code $C$ of a twin-free graph $G \in \mathcal{C}$, the traces of vertices of $G$ on $C$ are different. Hence, by Lemma~\ref{lem:sauer}, $n\leq \sum_{i=0}^d {|C| \choose i} \leq |C|^{d}+1$. Therefore, $|C|\geq (n-1)^{\frac{1}{d}}$, proving that $\mathcal C$ satisfies the second claim.
\qquad \end{proof}

The proof gives in fact the lower bound $\id(G) \in \Omega(n^{\frac{1}{dim(\mathcal C)}})$ for the second item. So if we can bound the VC-dimension of the class, then we immediately obtain lower bounds on the size of identifying codes.
Lemma~\ref{lem:vc} provides such bounds for several classes of graphs.

Let us give some definitions.
The \emph{girth} of a graph is the length of a shortest cycle. A \emph{chordal bipartite graph} is a bipartite graph without induced cycle of length at least $6$. A \emph{unit disk graph} is a graph of intersection of unit disks in the plane. An \emph{interval graph} is a graph of intersection of segments on a line.
An \emph{undirected path graph} is a graph of vertex-intersection of paths in an undirected tree (\emph{i.e.} two vertices are adjacent if their corresponding paths have at least one vertex in common).

\begin{figure}
\begin{center}
\begin{minipage}{0.44\linewidth}
\centering
\includegraphics[height=2.5cm]{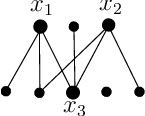}
\caption{The set $x_1,x_2,x_3$ is shattered in this chordal bipartite graph.}
\label{fig:chordalbipartite}
\end{minipage}
\hspace{0.05\linewidth}
\begin{minipage}{0.44\linewidth}
\centering
\includegraphics[height=2.5cm]{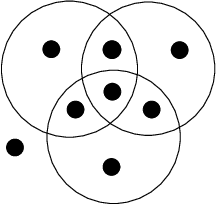}
\caption{A set of three vertices shattered by disks in the plane.}
\label{fig:completevenn}
\end{minipage}
\end{center}
\end{figure}

\

\begin{lemma}\label{lem:vc}
 The following upper bounds hold and are tight:
 \begin{itemize}
  \item The VC-dimension of graphs of girth at least $5$ is at most $2$.
  \item The VC-dimension of chordal bipartite graphs is at most $3$.
 \item The VC-dimension of unit disk graphs is at most $3$.
    \item The VC-dimension of interval graphs is at most $2$.
      \item The VC-dimension of undirected path graphs is at most~$3$.
 \end{itemize}
\end{lemma}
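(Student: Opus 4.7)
My plan is to handle each class separately. For the upper bounds I assume a set of size one more than the claimed bound is shattered and derive a contradiction from the class's defining property. For tightness I rely on the figures cited in the paper and on small direct constructions (e.g.\ the 5-cycle for girth~5 and nested intervals plus one ``universal'' and one ``isolated'' interval for the interval case).

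For \textbf{interval graphs} (bound 2) I consider a candidate shattered triple $\{x_1, x_2, x_3\}$ and order the intervals by left endpoint so that $l_{x_1} \le l_{x_2} \le l_{x_3}$. A realiser of the trace $\{x_1, x_3\}$ must reach $I_{x_3}$ while avoiding $I_{x_2}$, which, by a short case analysis depending on whether the realiser lies in $X$ or is external, is only possible if the realiser sits strictly to the right of $I_{x_2}$; this forces $r_{x_1} > r_{x_2}$. A parallel analysis of the trace $\{x_2\}$ shows that its realiser forces $r_{x_1} < r_{x_2}$, contradicting the first inequality. For \textbf{graphs of girth at least 5} (bound 2), the key fact is that two vertices share at most one common neighbour (else one finds a $C_4$). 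Given a shattered triple $X$, I examine the realiser $y$ of the trace $X$: if $y \notin X$ the $x_i$'s must be pairwise non-adjacent (no triangles through $y$) and the realiser of any pair trace $\{x_i, x_j\}$ is then an external common neighbour of $x_i, x_j$, which together with $y$ gives two common neighbours of that pair; if $y = x_1 \in X$, then $x_1$ is adjacent to $x_2$ and $x_3$ while $x_2 \not\sim x_3$, and again the external realiser of $\{x_2, x_3\}$ together with $x_1$ gives two common neighbours of $x_2, x_3$.

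For \textbf{unit disk graphs} (bound 3) I use the fact that closed neighbourhoods in a unit disk graph are exactly intersections of the vertex set with planar disks of radius 2 centred at vertex positions. Since the family of all planar disks has VC-dimension 3, Sauer's lemma (Lemma~\ref{lem:sauer}) bounds the number of distinct traces on any 4-set by $\sum_{i=0}^{3}\binom{4}{i}=15 < 2^4$, ruling out shattering of any 4-element set.

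For \textbf{chordal bipartite graphs} (bound 3) I assume $X = \{x_1, x_2, x_3, x_4\}$ is shattered in a bipartite graph with parts $(A, B)$. The key case is $X \subseteq B$: every pair trace of size $\ge 2$ must be realised by a distinct vertex of $A$ whose neighbourhood in $X$ matches the trace, so picking realisers $y_{12}, y_{23}, y_{34}, y_{14}$ of the four cyclic pair traces yields the sequence $x_1\, y_{12}\, x_2\, y_{23}\, x_3\, y_{34}\, x_4\, y_{14}\, x_1$, which is immediately seen to be an induced 8-cycle (non-edges follow from the trace definition and the bipartition), contradicting chordal bipartiteness. When $X$ is split between the two parts, cross-part pair traces force specific adjacencies inside $X$ while same-part pair and singleton traces require external realisers on the opposite side; from these I extract an induced cycle of length at least 6. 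For \textbf{undirected path graphs} (bound 3), a similar structural argument using the Helly property of paths in a tree (three pairwise intersecting paths share a common vertex) shows that the realisers of triple and pair traces in a shattered 4-set are incompatible, since the Helly property would force a single path in the tree to branch. Tightness is given by Figure~\ref{fig:chordalbipartite} for chordal bipartite, Figure~\ref{fig:completevenn} for unit disk, and analogous explicit constructions in the other classes. The main obstacle is the case analysis for the split-parts subcase of chordal bipartite and for undirected path graphs, which require careful tracking of which traces are realisable by vertices in which part or tree region to locate the forbidden induced configuration.
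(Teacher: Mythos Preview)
Your arguments for interval graphs, graphs of girth at least $5$, and unit disk graphs are correct and essentially the same as the paper's (the paper uses $P_6$ rather than $C_5$ for tightness in the girth case, and for intervals argues via the same two traces $\{x_1,x_3\}$ and $\{x_2\}$ that you do).

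For chordal bipartite graphs there is a genuine gap: you treat ``$X\subseteq B$'' as the main case and leave the split case as an acknowledged obstacle. The paper avoids the split case entirely by a one-line observation you are missing: the realiser of the full trace $\{x_1,x_2,x_3,x_4\}$ is a vertex whose closed neighbourhood contains all four $x_i$, and in a bipartite graph a closed neighbourhood meets its own side in exactly one vertex (the vertex itself). Hence at least three of the $x_i$ lie on the same side. Passing to that shattered $3$-subset, the three pair-trace realisers immediately give an induced $C_6$. Your $C_8$ construction when $X\subseteq B$ is correct, but the split-case sketch (``from these I extract an induced cycle of length at least $6$'') is not an argument, and is in any case unnecessary once you make the above reduction.

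For undirected path graphs the proposal is not a proof. Invoking the Helly property of subtrees is a reasonable instinct, but ``the Helly property would force a single path in the tree to branch'' does not locate a contradiction: you have not said which realisers are pairwise intersecting, what common vertex Helly produces, or why that vertex forces branching. The paper proceeds by a concrete case split: if $P_2,P_3,P_4$ all meet $P_1$, then restricting the tree to $P_1$ turns $\{P_2,P_3,P_4\}$ into a shattered triple in an interval graph, already impossible; otherwise some $P_j$ lies in a component $C$ of $T\setminus P_1$, one argues that the remaining path must also lie entirely in $C$, and then any path realising the trace $\{P_1,P_2,P_3\}$ visits these three in some linear order along itself, each possible order blocking one of the required pair traces. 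You need to supply an argument at this level of detail.
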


\

\begin{proof}
\begin{itemize}
 \item Let $G$ be a graph of girth at least $5$. Assume by contradiction that a set $\{ x_1,x_2,x_3 \}$ of three vertices is shattered. Since the girth is at least $5$, $x_1x_2x_3$ is not a clique. We may assume without loss of generality that $x_1$ and $x_2$ are not adjacent. Since $\{x_1,x_2,x_3 \}$ is shattered, there is a vertex $y_1$ adjacent to both $x_1$ and $x_2$ and not $x_3$ (one closed neighbourhood must have trace $\{x_1,x_2\}$ on $\{x_1,x_2,x_3\}$) and a vertex $y_2$ adjacent to $\{ x_1,x_2,x_3 \}$ (one closed neighbourhood must have trace $\{ x_1,x_2,x_3 \}$). Note that both $y_1$ and $y_2$ are distinct from $x_1$ and $x_2$ since $x_1$ and $x_2$ are not adjacent. Moreover $y_1$ and $y_2$ are distinct since they do not have the same neighbourhood in $\{x_1,x_2,x_3\}$. So $x_1y_1x_2y_2x_1$ is a cycle of length $4$, a contradiction with the girth assumption. \\
 This bound is tight, for instance with the path on six vertices.
 
 \item Let $G=(A \cup B,E)$ be a chordal bipartite graph. Assume by contradiction that $\{x_1,x_2,x_3,x_4\}$ is a shattered set of four vertices. Since there is a vertex whose closed neighbourhood contains the whole set of vertices, it means that at least three vertices, say $x_1,x_2,x_3$ are on the same side of the bipartite graph. Since a subset of a shattered set is shattered, $\{x_1,x_2,x_3\}$ is shattered. Thus there is a vertex incident to $x_1,x_2$ and not $x_3$, a vertex incident to $x_1,x_3$ and not $x_2$, and a vertex incident to $x_2,x_3$ and not $x_1$. It provides an induced cycle of length $6$, a contradiction.\\
 Moreover the bound is tight, see Figure~\ref{fig:chordalbipartite}.
 
 \item Let $G$ be a unit disk graph. Let us rephrase the adjacency and shattering conditions in this class: let $x_1$ and $x_2$ be any two vertices of a unit disk graph and denote by $c_1$ and $c_2$ their respective centers in a representation of the unit disk graph in the plane. The vertices $x_1$ and $x_2$ are adjacent if and only if $c_1$ and $c_2$ are at distance at most $2$. 
Thus if a set of unit disks is shattered then for every subset of centers, there exists a point at distance at most $2$ from these centers and more than 2 from the others. In other words, there exist points in all possible intersections of balls of radius $2$. \\
 A classical result ensures that the VC-dimension of a hypergraph whose hyperedges can be represented as a set of disks in the plane (and vertices as points of the plane) has VC-dimension at most $3$ (see~\cite{MatousekSW90} for instance). Thus unit disk graphs have VC-dimension at most $3$, and the bound can be reached (see Figure~\ref{fig:completevenn}).

\item Let $G$ be an interval graph. Assume by contradiction that there is a shattered set $\{I_1,I_2,I_3\}$ of $G$.
Assume that $I_1$ starts before $I_2$ and that $I_2$ starts before $I_3$. Since there is an interval $J$ intersecting both $I_1$ and $I_3$ but not $I_2$, $J$ must start after $I_2$ and thus $I_1$ contains $I_2$. Then there is no interval intersecting $I_2$ but not $I_1$, a contradiction. Thus interval graphs have VC-dimension at most $2$, and the bound is again reached with the path on six vertices.

 \begin{figure}
 \begin{center}
  \includegraphics[scale=1.2]{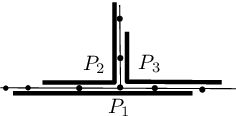}
  \caption{Paths $P_1,P_2,P_3$ are shattered by the eight points which are paths of length $0$.}
  \label{fig:undirectedpath}
 \end{center}
 \end{figure}
 
 \item Let $\mathcal{P}=\{P_1,P_2,P_3,P_4\}$ be a shattered set of four paths of a tree $T$. Assume first that $P_2, P_3, P_4$ all intersect $P_1$ and consider the restriction of $T$ to $P_1$, which is in fact an interval graph. To ensure all possible intersections with $P_1$, the set $\{P_2, P_3, P_4\}$ is a shattered set of size three in an interval graph, a contradiction. 
 
Thus at least one path, say $P_2$, does not intersect $P_1$ and lies in a connected component $C$ of the forest $F=T\setminus P_1$. If $P_3$ does not intersect $C$, then there is no path intersecting both $P_2$ and $P_3$ but not $P_1$. Thus $P_3$ intersects $C$. If moreover $P_3$ intersects $P_1$, then no path can intersect both $P_1$ and $P_2$ but not $P_3$. Thus $P_3$ is also included in $C$. Let $P$ be a path intersecting $P_1$, $P_2$ and $P_3$. Assume first that $P$ intersects the three paths in the order $P_1$, $P_2$ and $P_3$ (the case $P_1$, $P_3$, $P_2$) is the same. Then no path can intersect $P_1$ and $P_3$ without intersecting $P_2$. Assume now that $P$ intersects the three paths in the order $P_2$, $P_1$, $P_3$. Similarly, no path can intersect $P_2$ and $P_3$ without intersecting $P_1$. Hence the path $P$ cannot exist, a contradiction.
Finally the bound of $3$ can be reached, as shown in Figure~\ref{fig:undirectedpath}.
\end{itemize}
\qquad \end{proof}

Lemma~\ref{lem:vc} and Theorem~\ref{thm:dic} imply new lower bounds for many classes: 
$\Omega(n^{\frac{1}{2}})$ for graphs with girth at least $5$,
$\Omega(n^{\frac{1}{3}})$ for chordal bipartite graphs,
$\Omega(n^{\frac{1}{3}})$ for unit-disk graphs,
$\Omega(n^{\frac{1}{2}})$ for interval graphs,
$\Omega(n^{\frac{1}{3}})$ for permutation graphs, 
and
$\Omega(n^{\frac{1}{3}})$ for undirected path graphs.

The exponent given by Theorem~\ref{thm:dic} is sharp for several classes of graphs. Indeed, 
Foucaud et \emph{al.} \cite{FoucaudMNPV15-1} proved that there are infintely many interval graphs with identifying codes of size $\Theta(n^{1/2})$. The bound is also tight for $C_4$-free bipartite graphs (which have girth at least $5$): the following construction is a $C_4$-free bipartite graphs with an identifying code of size $\Theta(n^{\frac{1}{2}})$. Let $G=(X \cup Y,E)$ be a bipartite graph where $Y$ has size $n$, $X$ has size $\frac{n(n-1)}{2}$, and edges satisfy the following rule: for every pair $u,v$ of vertices of $Y$, there is exactly one vertex of $X$ adjacent to both $u$ and $v$. The graph $G$ does not contain any triangle (since it is bipartite) nor $C_4$ (since neighbourhoods intersect on at most one vertex). One can easily check that the set $Y$ is an identifying code of the graph. Indeed vertices of $X$ are adjacent to precisely two neighbours on $Y$ and vertices of $Y$ have precisely one neighbour on $Y$ in their closed neighbourhood. Finally, it is also sharp for the class of all graphs of VC-dimension at most $d$. Indeed, consider the bipartite graph made with a stable set $A$ of size $d$ and a stable set 
$B$ of size $\sum_{i=2}^d {d \choose i}$ representing all the subsets of $A$ of size at least $2$. Each vertex of $B$ is adjacent to the vertices of $A$ corresponding to its subset. This graph has VC-dimension $d$ and the set $A$ is an identifying code of size of order $n^{1/d}$.

Nevertheless, the bounds given by Theorem~\ref{thm:dic} are not necessarily tight. For instance, permutations graphs can have VC-dimension $3$ but Foucaud et \emph{al.} \cite{FoucaudMNPV15-1} recently proved that the exact lower bound is $\Omega(n^{\frac{1}{2}})$.


\section{Inapproximability in infinite VC-dimension}\label{sec:approx}

Given a minimization problem $P$ and a function $f:\mathbb N\to \mathbb N$, a  \emph{factor} $f$ \emph{approximation algorithm} (also called an \emph{$f$-approximation}) is an algorithm that outputs a solution of value at most $f(n) \cdot OPT(I)$ for every instance $I$ of $P$ of size $n$, where $OPT(I)$ is the value of an optimal solution of $I$. The class \emph{$\log$-APX} is a class of problems consisting of all problems that admit a logarithmic factor polynomial time approximation algorithm. We use the AP-reductions introduced in \cite{CKST} which have now become standard. Its definition restricted to minimization problems is defined as follows:

\

\begin{definition}[\cite{ACGKMP}]
Let $P$ and $Q$ be two minimization problems. An \emph{AP-reduction} from $P$ to $Q$ is a triple $(f,g,\alpha)$ where

\begin{enumerate}
\item
$\alpha$ is a constant,
\item
$f$ maps pairs consisting of an instance of $P$ and a constant $r>1$ to instances of $Q$, and
\item
$g$ maps triples consisting of a constant $r>1$, an instance $I_P$ of $P$ and a solution to $f(I_P,r)$ to a solution of $I_P$
\end{enumerate}

 in such a way that

\begin{enumerate}
\item
$f(I_P, r)$ has a solution if $I_P$ does,
\item
$f(\cdot, r)$ and $g(\cdot, \cdot, r)$ are computable in polynomial time for all fixed $r$, and
\item
if $SOL_Q$ is a solution of $f(I_P,r)$ of size at most $r \cdot OPT(f(I_P,r))$, then the solution $g(f(I_P,r), r, SOL_Q)$ has size at most $(1+\alpha (r-1)) \cdot OPT(I_P)$.
\end{enumerate}
\end{definition}

\

A problem $Q$ is \emph{$\log$-APX-hard} if any problem $P$ in $\log$-APX can be reduced to $Q$ by an AP-reduction. 

\

\begin{theorem}[\cite{CKST}]
Any optimization problem $P$ that is $\log$-APX-hard with respect to AP-reduction is NP-hard to approximate within a factor $c \cdot \log(n)$ where $n$ is the size of the input, for some constant $c > 0$.
\end{theorem}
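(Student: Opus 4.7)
The plan is to derive a contradiction with a known strong inapproximability result for a canonical log-APX problem by pulling the hardness back through the AP-reduction. The natural witness is $\textsc{Min Set Cover}$: it lies in log-APX via the greedy algorithm, and Feige, later sharpened by Dinur--Steurer, showed that it is $\mathsf{NP}$-hard to approximate within a factor $(1-\varepsilon)\ln n$. Write $P$ for $\textsc{Min Set Cover}$ and let $Q$ denote the given log-APX-hard problem, so that by hypothesis there exists an AP-reduction $(f,g,\alpha)$ from $P$ to $Q$.

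Suppose, toward contradiction, that for some small constant $c>0$ to be fixed below there is a polynomial-time $c\log n$-approximation $\mathcal{A}$ for $Q$. Given an instance $I_P$ of $P$ of size $m$ and any parameter $r>1$, I would compute $I_Q:=f(I_P,r)$, run $\mathcal{A}$ on $I_Q$ to obtain a solution $S_Q$ with $|S_Q|\le c\log(|I_Q|)\cdot OPT(I_Q)$, and return $g(I_P,r,S_Q)$. Whenever $c\log(|I_Q|)\le r$, the third axiom of the AP-reduction guarantees that this output has size at most $(1+\alpha(r-1))\cdot OPT(I_P)$. Choosing $r=r(m)$ to grow like $\Theta(\log m)$ from a fixed discretisation of $(1,\infty)$ makes the hypothesis $c\log|I_Q|\le r$ hold while yielding an overall approximation ratio $1+\alpha(r-1)=O(\alpha c\log m)$ for $P$. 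Taking $c$ small enough that this leading constant is strictly less than $1-\varepsilon$ (after adjusting logarithm bases) contradicts Feige's bound, so the assumed approximation algorithm cannot exist, which is the desired conclusion for that specific $c$.

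The principal subtlety, and the obstacle that must be handled with care, is the tension between the AP-reduction's requirement that $r$ be a \emph{constant} and the argument's desire to let $r$ drift upward with the input size $m$. The standard workaround is the discretisation above: a single polynomial-time algorithm inspects $m$, picks $r(m)$ from a fixed countable grid of rationals, and dispatches to the corresponding polynomial-time pair $(f(\cdot,r(m)),g(\cdot,\cdot,r(m)))$. Since only finitely many grid values are needed to cover any bounded range of input sizes, the overall runtime remains polynomial, and I also need the polynomial blow-up $|I_Q|$ to stay bounded by a fixed polynomial in $m$ along this grid, which is the property guaranteed by the specific form of AP-reductions used when log-APX-completeness is established in \cite{CKST}; with that in hand the argument goes through as claimed.
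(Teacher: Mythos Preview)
The paper does not prove this theorem at all: it is stated as a known result and attributed to \cite{CKST} (Crescenzi--Kann--Silvestri--Trevisan) without further argument. So there is no ``paper's own proof'' to compare against; your write-up is attempting something the authors deliberately outsourced to the literature.

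That said, your sketch follows the standard line of reasoning for results of this type --- pick a canonical log-APX problem with a sharp inapproximability threshold (here \textsc{Min Set Cover} with Feige's $(1-\varepsilon)\ln n$ bound), pull it through the AP-reduction, and derive a contradiction. You also correctly isolate the genuine technical obstacle: the AP-reduction guarantees polynomial time for $f(\cdot,r)$ and $g(\cdot,\cdot,r)$ only for each \emph{fixed} $r$, with no a~priori uniform polynomial bound across $r$, yet your argument needs $r$ to grow like $\Theta(\log m)$. Your proposed fix --- discretise $r$ and rely on the particular reductions in \cite{CKST} having uniformly polynomial blow-up --- is the right instinct, but as written it is an appeal to properties not contained in the abstract definition of AP-reduction given here. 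In other words, the statement as phrased (``log-APX-hard with respect to AP-reduction'') does not by itself supply the uniformity you invoke; one really does need to open \cite{CKST} and use either their specific reductions or a stronger reduction notion (such as the L-reductions or E-reductions that underlie their completeness proofs). Your last paragraph acknowledges this honestly, so the sketch is not wrong, but it is incomplete in exactly the place you flag, and a fully self-contained proof would have to either quote the relevant uniformity lemma from \cite{CKST} or rework the argument using a fixed $r$ together with a gap-amplification step.
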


\

We show that \idcprob\ is $\log$-APX-hard for classes with infinite VC-dimension.
To prove this result, we will prove that a class with infinite VC-dimension contains either all the bipartite graphs, or all the co-bipartite graphs or all the split graphs. Since the problem \idcprob\ is $\log$-APX-hard in these three classes (see \cite{F13}), it implies that it is $\log$-APX-hard for all classes with infinite VC-dimension.

\

\begin{theorem}\label{thm:allbip}
Let $\mathcal C$ be an hereditary class. If $\mathcal C$ has infinite VC-dimension, then $\mathcal C$ must contain either all the bipartite graphs, or all the co-bipartite graphs or all the split graphs.
\end{theorem}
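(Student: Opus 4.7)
The plan is to contrapositively exploit the large shattered sets guaranteed by infinite VC-dimension and use Ramsey-type arguments to bring each such set into a highly symmetric form that embeds arbitrary bipartite, co-bipartite, or split graphs. I would start by fixing $k$ enormous (to be chosen after estimating losses) and, using infinite VC-dimension, pick $H_k\in\mathcal{C}$ with a shattered set $X_k$ of size $k$, together with a witness $y_S$ for each non-empty $S\subseteq X_k$ satisfying $N[y_S]\cap X_k=S$. Applying Ramsey's theorem to the edges within $X_k$, I would then pass to a monochromatic shattered subset $X'_k\subseteq X_k$ of size growing with $k$ (either a clique or an independent set), still keeping a witness for every $S\subseteq X'_k$ (subsets of shattered sets remain shattered).

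The main technical step is to further restrict so that the witness set also becomes monochromatic. Concretely, I want to find $X''_k\subseteq X'_k$ of size $m$ growing with $k$, together with witnesses $z_T$ for each $T\subseteq X''_k$ (with $N[z_T]\cap X''_k=T$), such that $\{z_T\}_T$ is either an independent set or a clique. I expect this to be the main obstacle: a naive application of Ramsey to the $2^{|X'_k|}-1$ available witnesses yields a monochromatic family of only logarithmic size, which by Sauer--Shelah shatters almost nothing of $X'_k$. The remedy is to iterate Ramsey carefully (for instance, first on singleton witnesses, then progressively on richer slices of subsets) and to combine Ramsey with a quantitative Sauer--Shelah-type extraction, taking $k$ towerably large so that these iterated losses are absorbed.

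Assuming the above double monochromatization, I would conclude by case analysis on the four possible color combinations on the pair $(X''_k,\{z_T\}_T)$. Let $G^*_k$ be the induced subgraph on $X''_k\cup\{z_T\}_T$; its edges between $X''_k$ and the witnesses are entirely determined by the trace condition. When both $X''_k$ and $\{z_T\}_T$ are independent, $G^*_k$ is bipartite, and every bipartite graph $B$ with both parts of size at most $m$ embeds as an induced subgraph by sending one side into a subset of $X''_k$ and each vertex $b$ of the other side to $z_{N_B(b)}$ (repeated neighborhoods are handled by padding with a few distinguishing dummy vertices, which fits when $m$ grows). When both are cliques, $G^*_k$ is co-bipartite and the complementary argument embeds every co-bipartite graph. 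In the two mixed cases, $G^*_k$ is a split graph and every split graph embeds by the same scheme. Since infinite VC-dimension lets $m$ grow without bound, by pigeonhole at least one of the three color combinations occurs for arbitrarily large $m$; using heredity of $\mathcal{C}$, this delivers every bipartite, every co-bipartite, or every split graph in $\mathcal{C}$, as required.
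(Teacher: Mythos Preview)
Your final four-way case analysis is sound, and the overall shape---monochromatize the shattered set, then the witnesses, then split cases---is a natural first attempt. The gap is exactly where you locate it, and your sketch does not close it. You need, for $m\to\infty$, a set $X''_k$ of size $m$ together with a monochromatic family $\{z_T:T\subseteq X''_k\}$ hitting \emph{all} $2^m$ subsets. A single application of Ramsey to the $2^{\ell}$ witnesses ($\ell=|X'_k|$) yields a monochromatic $W$ of size only about $\ell$, and the corresponding traces $\{S:y_S\in W\}$ form an arbitrary family of $\ell$ subsets of an $\ell$-set; nothing prevents this family from being, say, a chain, which shatters no two-element set. Your ``layer by layer'' iteration does not repair this: each Ramsey pass hands back an unstructured monochromatic family, and you give no mechanism forcing the surviving witnesses to realise the full power set of a common $X''_k$ while remaining pairwise monochromatic with the witnesses retained at earlier stages. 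Taking $k$ tower-large enlarges the pool of witnesses but does not supply that structural mechanism. As written, this step is a hope rather than an argument.

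The paper's proof avoids the obstacle by a genuinely different route. It first shows (Lemma~\ref{lem:fullcrossing}), using a shattered set of size $k+\lceil\log 2k\rceil$ so that every prescribed trace has at least $k$ witnesses outside the shattered set, that every bipartite graph $H$ is the \emph{bipartisation} of some $G\in\mathcal{C}$. It then takes $H=G_0$ to be a random bipartite graph on $n+n$ vertices with no complete or empty balanced bipartite subgraph of side $2\log n$ (Lemma~\ref{lem:largebipartite}). Inside each part of the resulting $G\in\mathcal{C}$, the Erd\H{o}s--Hajnal theorem extracts a large clique or stable set; the bipartite Erd\H{o}s--Hajnal--Pach theorem, applied across the two extracted sets, then either produces the desired induced $H^{a,b}$ or forces a large homogeneous bipartite subgraph of $G_0$, contradicting its construction (Lemma~\ref{lem:oneoffour}). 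A final pigeonhole over $n$ fixes one of the four types. The point is that the monochromatization is done \emph{separately on each side} via Erd\H{o}s--Hajnal, and the interaction between the sides is controlled not by Ramsey on an exponential witness family but by the universal random bipartite template $G_0$ together with Erd\H{o}s--Hajnal--Pach; this is the missing ingredient in your plan.
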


\

Note that this result implies the first part of Theorem \ref{thm:dic}.
  We say that a bipartite graph $H=(A \cup B, E)$ is a \emph{bipartisation} of $G$ if removing all edges in $A$ and in $B$ in $G$ yields $H$ for some partition $A,B$ of $V(G)$.

\

 \begin{lemma}\label{lem:fullcrossing}
   For any  hereditary class $\mathcal{C}$ of graphs with infinite VC-dimension and any bipartite graph $H$, $\mathcal{C}$ contains a graph $G$ whose bipartisation is $H$.
 \end{lemma}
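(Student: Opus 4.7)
The plan is to realize $H=(A\cup B,E)$ with $|A|=a$ and $|B|=b$ as the bipartisation of a suitable induced subgraph $G$ of some graph $G^\star\in\mathcal{C}$. Since $\mathcal{C}$ has infinite VC-dimension, I can choose $G^\star\in\mathcal{C}$ containing a shattered set $X$ of size $k:=a+\lceil \log_2(a+b)\rceil$. Partition $X$ as $A'\cup X_B$ with $|A'|=a$ and fix a bijection between $A'$ and $A$. It then suffices to find, for every $b_i\in B$, a vertex $y_i\in V(G^\star)\setminus A'$ with $N_{G^\star}[y_i]\cap A' = N_H(b_i)$ (under the identification $A'\leftrightarrow A$), the $y_i$'s being pairwise distinct. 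Indeed, setting $G:=G^\star[A'\cup\{y_1,\dots,y_b\}]$ produces an induced subgraph of $G^\star$, hence a member of $\mathcal{C}$, and by construction a cross-edge $a_j y_i$ of $G$ exists iff $a_j\in N_H(b_i)$; so the bipartisation of $G$ with parts $A'$ and $\{y_1,\dots,y_b\}$ is exactly $H$.

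The core step is a counting estimate controlling witnesses. For any fixed $S\subseteq A'$, the shattering of $X$ provides, for each of the $2^{|X_B|}$ subsets $T\subseteq X$ with $T\cap A'=S$, a witness $v_T$ with $N[v_T]\cap X=T$; distinct $T$'s yield distinct $v_T$'s since their traces on $X$ differ. Among these witnesses, any $v_T\in A'$ must belong to $T\cap A'=S$, and for each $v\in S$ there is at most one $T$ (namely $T=N[v]\cap X$) for which $v$ serves as witness, so at most $|S|\leq a$ of the $v_T$'s lie in $A'$. Hence at least $2^{|X_B|}-a \geq b$ distinct witnesses with trace $S$ on $A'$ lie outside $A'$, and this holds for \emph{every} possible $S$.

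From here, choosing the $y_i$'s is straightforward. Partition $B$ according to the common neighbourhood $N_H(b)$ in $A$ into groups of sizes $n_1,\dots,n_m$. Within a group of size $n_\ell\leq b$, all members share a common target trace $S_\ell\subseteq A'$, and by the above estimate the associated pool of admissible witnesses outside $A'$ has size at least $b\geq n_\ell$, so I can pick $n_\ell$ distinct elements from it. Witnesses for different groups live in disjoint pools (since the target traces on $A'$ differ), so the resulting $y_1,\dots,y_b$ are pairwise distinct and all lie outside $A'$, completing the construction.

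The main obstacle I expect is arranging the $y_i$'s to be simultaneously distinct and outside $A'$ when many vertices of $B$ share the same neighbourhood in $A$; a single shattered set of size $a$ only guarantees one witness per trace, which is insufficient. The logarithmic excess $X_B$ provides exponentially many candidate witnesses per trace, which is precisely the leverage that unbounded VC-dimension affords.
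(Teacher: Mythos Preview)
Your proof is correct and follows essentially the same approach as the paper's: enlarge the shattered set by a logarithmic amount so that each prescribed trace on $A'$ has enough witnesses outside $A'$ to accommodate all vertices of $B$ with that neighbourhood. The only cosmetic differences are that the paper first assumes $|B|\le |A|$ and takes $\lceil \log(2k)\rceil$ extra shattered vertices, while you work directly with $\lceil \log_2(a+b)\rceil$ and give a slightly sharper count (at most $|S|$ rather than at most $|A'|$ witnesses inside $A'$); the core idea is identical.
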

 
\

\begin{proof}
  Let $H=(A\cup B,E)$ be a bipartite graph with $|B| \le |A|=k$.
  Since $\mathcal C$ has infinite VC-dimension, it contains a graph $G$ with a shattered set $S$ of size (at least) $\ell = k+\lceil \log(2k) \rceil$. Let $A'$ be the first $k$ vertices in $S$, and let us number $Y_1, \ldots, Y_{2k}$ some  $2k$ distinct subsets of $S\setminus A'$ (they exists since $|S \setminus A'| = \lceil \log(2k) \rceil$).

  By definition of a shattered set, for each $i\in \{1, \ldots 2k\}$ and for each $X \subseteq A'$ there is a vertex $x_i$ of $G$ such that $N[x_i]\cap S=X\cup Y_i$. Thus there are $2k$ vertices of $G$ whose closed neighbourhoods intersect $A'$ in exactly $X$. Hence there are at least $k$ such vertices in $V(G)\setminus A'$. Label the vertices of $A'$ by vertices in $A$, \emph{i.e.} choose an arbitrary bijection between $A$ and $A'$. Now for each $b\in B$, choosing $X=N(b)$ gives $k$ vertices in $V(G)\setminus A'$ whose closed neighbourhoods intersect $A'$ in exactly $N(b)$. So we can choose one "representative" for each $b$  so that all the selected vertices are distinct (since $|B| \leq k$). Note that we need $k$ vertices in $V \setminus A'$ since up to $|B|$ vertices of $B$ may have the same neighbourhood in $A$.
  
  Since $\mathcal{C}$ is closed under taking induced subgraphs, the subgraph of $G$ induced by $A'$ and the set $B'$ of all chosen vertices is in $\mathcal{C}$. The bipartisation of this graph is $H$, as required. 
\qquad \end{proof}
 
\vspace{0.3cm}

Next we show that we can further restrict $H'$ and now require both sides of $H'$ to be stable sets or cliques.
For a bipartite graph $H=(A\cup B,E)$, write 
$H^{1,0}$ for the graph obtained from $H$ by adding a clique on $A$, 
$H^{0,1}$ the graph obtained from $H$ by adding a clique on $B$ and
$H^{1,1}$ the graph obtained from $H$ by adding a clique on both $A$ and $B$. We also write sometimes $H^{0,0}$ for $H$. We show that, for each bipartite graph $H$, $\mathcal{C}$ contains one of these four graphs.
To do so, we need the classical theorem of Erd\H{o}s and Hajnal~\cite{EH} as well as its bipartite version by Erd\H{o}s, Hajnal and Pach~\cite{EHP}.

\

\begin{theorem}[Erd\H{o}s, Hajnal~\cite{EH}]\label{EH}
  For every graph $H$, there exists a constant $c(H)$ such that all graphs on $n$ vertices contain either $H$ as an induced subgraph, a stable set of size at least $2^{c(H)\sqrt{2\log n}}$ or a clique of size at least $2^{c(H)\sqrt{2 \log n}}$.
\end{theorem}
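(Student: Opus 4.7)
The plan is to proceed by induction on $h := |V(H)|$. The base case $h = 1$ is immediate with constant $c(H) = 1$, since any non-empty graph contains a single vertex. For $h \geq 2$, fix any vertex $v \in V(H)$ and split $V(H)\setminus\{v\}$ into $A := N_H(v)$ and $B := V(H)\setminus N_H[v]$, so that both $H[A]$ and $H[B]$ have fewer than $h$ vertices. The inductive hypothesis then yields constants $c_A, c_B > 0$ such that every $H[A]$-free (respectively $H[B]$-free) graph on $m$ vertices contains a clique or stable set of size $2^{c_A\sqrt{2\log m}}$ (respectively $2^{c_B\sqrt{2\log m}}$).

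The core of the argument I would use is a Ramsey-style neighbourhood-halving procedure on an $H$-free graph $G$ on $n$ vertices. Starting from $V_0 = V(G)$, at step $i$ pick any vertex $u_i \in V_{i-1}$, compare $|N_{V_{i-1}}(u_i)|$ with $|V_{i-1}\setminus N_{V_{i-1}}[u_i]|$, let $V_i$ be the larger of the two, and record a type $t_i \in \{0,1\}$ indicating whether $u_i$ is fully adjacent ($t_i = 1$) or fully non-adjacent ($t_i = 0$) to all of $V_i$. After $k$ steps, $|V_k| \geq (n-k)/2^k$; moreover, since $u_j \in V_i$ for all $i < j$, we have $u_iu_j \in E(G) \iff t_i = 1$. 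Consequently any index set on which $(t_i)$ is constant produces a clique (type~$1$) or a stable set (type~$0$) among the $u_i$'s.

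Given this setup, the strategy is to combine the pigeonhole on the type sequence with the inductive hypothesis applied inside the still-large residual $V_k$. I would choose $k$ of order $\sqrt{2\log n}$; if either type already appears at least $k/2$ times among $t_1,\ldots,t_k$, a monochromatic subsequence of that length yields a clique or stable set already of the required order. Otherwise, the $H$-freeness of $G$, together with the uniformly-adjacent structure of $\{u_1,\ldots,u_k\}$ to $V_k$, forces $G[V_k]$ to avoid either $H[A]$ or $H[B]$ as an induced subgraph: indeed, an induced $H[A]$ inside $V_k$ combined with a subfamily of the $u_i$'s whose types match the adjacencies of $v$ to $A$, together with a suitable induced $H[B]$ in a $u_i$-defined subset, would reconstruct an induced $H$ in $G$. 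Applying induction to $G[V_k]$ then produces a clique or stable set of size $2^{c'\sqrt{2\log|V_k|}}$ with $c' = \min(c_A, c_B)$.

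The main obstacle is calibrating the balancing step: the two sources (direct pigeonhole on the types, and induction inside $V_k$) must be combined so that the resulting recursion $f_H(n) \geq \min\bigl(k,\, f_{H[A]\text{ or }H[B]}(n/2^k)\bigr)$ solves to at least $2^{c(H)\sqrt{2\log n}}$. The optimum occurs precisely at $k = \Theta(\sqrt{\log n})$, which is where the square root in the exponent originates and which explains why this approach yields only $2^{c\sqrt{\log n}}$ rather than the polynomial bound of the celebrated Erd\H{o}s--Hajnal conjecture. A secondary subtlety is formalising the ``$H[A]$- or $H[B]$-freeness forced inside $V_k$'' step, which likely requires passing from a single halving round to a short sequence of them so as to accommodate all possible cross-adjacency patterns between $A$ and $B$ in $H$.
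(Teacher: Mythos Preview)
The paper does not prove this theorem: it is quoted verbatim from Erd\H{o}s and Hajnal~\cite{EH} and used as a black box in the proof of Lemma~\ref{lem:oneoffour}. There is therefore no ``paper's own proof'' to compare your attempt against.

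On the substance of your sketch: you have the right overall architecture of the classical argument --- induction on $|V(H)|$, the neighbourhood-halving procedure that builds a cograph-like sequence $u_1,\ldots,u_k$, and the balance $k \approx \sqrt{\log n}$ that produces the $2^{c\sqrt{\log n}}$ bound. However, the key step you flag as a ``secondary subtlety'' is in fact the heart of the proof, and as written it does not work. There is no reason why $G[V_k]$ should be $H[A]$-free or $H[B]$-free merely because $G$ is $H$-free: the pivot vertices $u_i$ are adjacent or non-adjacent to \emph{all} of $V_k$ uniformly, so they cannot play the role of the single vertex $v\in V(H)$ with its mixed adjacency pattern to $A$ and $B$. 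Your proposed fix (``a short sequence of halving rounds to accommodate all cross-adjacency patterns between $A$ and $B$'') is pointing in the right direction, but making it precise is exactly the content of the Erd\H{o}s--Hajnal paper: one does not induct on the homogeneous-set function directly but rather proves a stronger structural statement (in the original, that an $H$-free graph contains a large induced cograph, which then yields the homogeneous set). Without that strengthening, the recursion you wrote down does not close.
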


\

\begin{theorem}[Erd\H{o}s, Hajnal, Pach~\cite{EHP}]\label{EHP}
  Let $H$ be a bipartite graph with vertex classes $U_1$ and $U_2$,
  ($k=|U_1| \le |U_2 | = \ell$) and let $n > \ell^{k+1}$. 
  Then in any bipartite graph $G$ with vertex classes $V_1$ and $V_2$ ($|V_1| =
  |V_2| = n$) which contains no two subsets $U_1 \subseteq V_1$, $U_2 \subseteq V_2$ that induce an isomorphic copy of $H$, there exist $V_1'\subseteq V_1$ and $V_2' \subseteq V_2$ of size $\left \lfloor \left(\frac{n}{\ell}\right)^{\frac{1}{k}} \right\rfloor$ such that either all edges between $V'_1$ and $V'_2$ belong to $G$ or none of them does.
\end{theorem}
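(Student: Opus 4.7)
This is the classical bipartite Ramsey theorem of Erd\H{o}s, Hajnal and Pach, and the plan is to prove it by contrapositive via iterative density refinement. Assume $G$ is a bipartite graph on $V_1 \cup V_2$ with $|V_1|=|V_2|=n>\ell^{k+1}$ such that no $m\times m$ homogeneous (complete bipartite or empty) induced subgraph exists, where $m=\lfloor(n/\ell)^{1/k}\rfloor$; the goal is to produce an induced bipartite copy of $H$. First I would label $U_1=\{a_1,\ldots,a_k\}$, $U_2=\{b_1,\ldots,b_\ell\}$, and fix the type $\tau_j=\{i:a_ib_j\in E(H)\}\subseteq[k]$ of each $b_j$; the target is then to pick $u_1,\ldots,u_k\in V_1$ and distinct $w_1,\ldots,w_\ell\in V_2$ with $N_G(w_j)\cap\{u_1,\ldots,u_k\}=\{u_i:i\in\tau_j\}$, which gives an induced copy of $H$.

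The heart of the argument is a greedy construction of the $u_i$'s maintaining a partition invariant on $V_2$. After step $i$, partition $V_2$ into $2^i$ ``pools'' $W_T^{(i)}=\{v\in V_2:N_G(v)\cap\{u_1,\ldots,u_i\}=\{u_t:t\in T\}\}$, one per subset $T\subseteq[i]$, and carry the invariant that each pool has size at least some threshold depending on $i$. When choosing $u_{i+1}$, every pool $W_T^{(i)}$ bisects into its adjacencies and non-adjacencies to $u_{i+1}$. The key lemma --- which uses the no-homogeneous-subgraph assumption --- is that by averaging over $u\in V_1$, one can select $u_{i+1}$ so that every pool splits non-trivially: if no such $u$ existed, then a concentrated side of some pool, together with the large subset of $V_1$ rendering it nearly uniform, would furnish an $m\times m$ homogeneous subgraph, contradicting the assumption. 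After $k$ iterations, each pool $W_T^{(k)}$ contains at least $\ell$ vertices, so one can pick distinct representatives $w_j\in W_{\tau_j}^{(k)}$ (even accounting for repeated types among the $b_j$'s), producing $H$ as an induced bipartite subgraph.

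The main obstacle will be the precise quantitative bookkeeping: carrying an invariant of the form $|W_T^{(i)}|\geq n/C^i$ through the cascade and quantifying exactly how the absence of an $m\times m$ homogeneous subgraph prevents pool sizes from collapsing by more than a bounded factor per step. The bound $m=\lfloor(n/\ell)^{1/k}\rfloor$ arises from balancing the $k$-fold multiplicative shrinkage against the terminal requirement $|W_T^{(k)}|\geq \ell$, and the hypothesis $n>\ell^{k+1}$ is exactly what is needed to make this balance go through.
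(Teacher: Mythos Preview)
The paper does not prove this theorem at all: it is quoted verbatim from Erd\H{o}s, Hajnal and Pach and used as a black box in the proof of Lemma~\ref{lem:oneoffour}. So there is no ``paper's own proof'' to compare against; any correct argument you supply would be a bonus rather than a reproduction.

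That said, your sketch follows the standard embedding strategy for this result and is on the right track, with one point worth tightening. You propose to maintain the full partition of $V_2$ into all $2^i$ pools $W_T^{(i)}$ and keep \emph{every} pool above a threshold. That invariant is stronger than what is needed and stronger than what the no-homogeneous-pair hypothesis can guarantee: with $2^i$ pools and a union-bound over bad choices of $u_{i+1}$, the numbers do not close. What actually works is to track only the (at most $\ell$) pools corresponding to the restricted types $\tau_j\cap[i]$ that occur in $H$; then the union bound is over $\ell$ pools per step rather than $2^i$, and the shrinkage factor of roughly $m$ per step over $k$ steps, against the terminal requirement $|W^{(k)}_{\tau_j}|\ge \ell$, yields exactly $m=\lfloor (n/\ell)^{1/k}\rfloor$ under $n>\ell^{k+1}$. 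With that correction your outline matches the original proof in \cite{EHP}.
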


\

We continue with the following technical lemmata:

\

\begin{lemma}\label{lem:largebipartite}
For $n$ large enough, there exists a bipartite graph $G_0=(A\cup B,E_0)$ with $2n$ vertices ($|A|=|B|=n$) such that there is no complete nor empty bipartite graphs $G_0[A'\cup B']$ with $A'\subseteq A$ and $B'\subseteq B$ and $|A'|=|B'|=\lfloor 2 \log n \rfloor$.
\end{lemma}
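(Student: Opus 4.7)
The plan is a standard first-moment argument on a uniformly random bipartite graph. Take $G_0 = (A \cup B, E_0)$ with $|A| = |B| = n$, where each of the $n^2$ potential edges between $A$ and $B$ is included independently with probability $1/2$. Set $k = \lfloor 2 \log n \rfloor$. The goal is to show that, for $n$ large enough, with positive probability no pair $(A', B')$ with $A' \subseteq A$, $B' \subseteq B$ and $|A'| = |B'| = k$ induces either the complete or the empty bipartite graph on $A' \cup B'$, which is exactly what the lemma asks for.

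To do this, I would let $X$ count the number of such ``bad'' pairs and show $\mathbb{E}[X] < 1$. For a fixed pair $(A', B')$, the probability that $G_0[A' \cup B']$ equals $K_{k,k}$ or its bipartite complement is exactly $2 \cdot (1/2)^{k^2} = 2^{1-k^2}$, since all $k^2$ potential edges between $A'$ and $B'$ must be simultaneously present or simultaneously absent. Applying a union bound over the $\binom{n}{k}^2$ ordered pairs of $k$-subsets yields
\[
\mathbb{E}[X] \le \binom{n}{k}^2 \cdot 2^{1-k^2} \le \frac{2\, n^{2k}}{(k!)^2} \cdot 2^{-k^2}.
\]

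Taking logarithms (base $2$), the exponent becomes $1 + k(2 \log n - k) - 2 \log(k!)$. With $k = \lfloor 2 \log n \rfloor$, the difference $2\log n - k$ is strictly less than $1$, so the positive term $k(2 \log n - k)$ is at most $k \le 2 \log n$. On the other hand, Stirling gives $2 \log(k!) \ge 2k \log(k/e) = \Theta(\log n \cdot \log \log n)$, which eventually dominates $2 \log n$. Hence $\mathbb{E}[X] \to 0$, so $\mathbb{E}[X] < 1$ for $n$ large enough, and some realization of $G_0$ has the required property.

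No real obstacle is anticipated: the argument is purely a first-moment computation, and the only thing to watch is that the slack $2 \log n - k < 1$ is exactly what is needed to make the exponential factors $n^{2k}$ and $2^{-k^2}$ nearly cancel, leaving the $(k!)^{-2}$ factor to drive the expectation to $0$.
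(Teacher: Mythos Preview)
Your proposal is correct and follows essentially the same approach as the paper: a first-moment argument on the random bipartite graph with edge probability $1/2$, union-bounding over all $\binom{n}{k}^2$ pairs of $k$-subsets with $k=\lfloor 2\log n\rfloor$, and observing that the resulting expected number of bad pairs is $2^{-\Theta(\log n\cdot\log\log n)}\to 0$. The only cosmetic difference is that the paper bounds $\binom{n}{k}$ by $(ne/k)^k$ while you use $n^k/k!$ together with Stirling, which amounts to the same estimate.
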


\

\begin{proof}
Let us show its existence with a probabilistic argument. Let $A$ and $B$ be two stable sets each of size $n$ and for every $a\in A, b\in B$, put the edge $ab$ with probability $\frac{1}{2}$. 
Given two subsets $A'\subseteq A$, $B'\subseteq B$ with $|A'|=|B'|=\lfloor 2\log n \rfloor$, the probability that $A'\cup B'$ induces a complete bipartite graph is $\left(\frac{1}{2}\right)^{\lfloor2\log n\rfloor^2}$.
 The same probability holds for $A'\cup B'$ inducing an empty bipartite graph. Thus the probability that there exists a complete or empty bipartite graph with each part of size $2\log n$ is at most 

$${n \choose \lfloor 2\log n \rfloor}^2\frac{2}{2^{\lfloor 2 \log n\rfloor^2}} \leq  \left(\frac{n\cdot e}{\lfloor2\log n\rfloor}\right)^{2 \cdot \lfloor2\log n\rfloor}\cdot \frac{2}{2^{\lfloor2\log n\rfloor^2}}=2^{-4\log n \cdot \log \log n +\mathcal{O}(\log n)}$$ 

using the inequality

$${n \choose l} \leq \frac{n^l}{l!} \leq \left(\frac{n\cdot e}{l}\right)^l$$

This probability is strictly less than 1 for $n$ large enough, so there exists a graph $G_0=(A\cup B,E_0)$ for which the event does not occur.
\qquad \end{proof}

\

\begin{lemma}\label{lem:oneoffour}
Let $\mathcal C$ be an hereditary class with infinite VC-dimension. For any bipartite graph $H=(H_\ell\cup H_r, E)$, one of the four graphs $H^{0,0}$, $H^{1,0}$, $H^{0,1}$ or $H^{1,1}$ is in $\mathcal C$.
\end{lemma}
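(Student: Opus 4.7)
The plan is to combine Lemma~\ref{lem:fullcrossing} with the pseudo-random bipartite graph from Lemma~\ref{lem:largebipartite}, then apply both Erd\H{o}s--Hajnal--type statements to extract the right homogeneous substructures on both sides. One first disposes of the trivial situation where $\mathcal{C}$ is the class of all graphs, for then $H=H^{0,0}\in\mathcal{C}$. Otherwise fix some $H^*\notin\mathcal{C}$, so that every member of $\mathcal{C}$ is $H^*$-free; write $k=|H_\ell|\le\ell=|H_r|$ (up to swapping the sides of $H$) and pick $N$ large.

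I would invoke Lemma~\ref{lem:largebipartite} to obtain a bipartite graph $G_0=(A_0\cup B_0,E_0)$ on $2N$ vertices which contains no complete and no empty bipartite subgraph with both sides of size $\lfloor 2\log N\rfloor$. Then Lemma~\ref{lem:fullcrossing}, applied with $G_0$, produces a graph $G\in\mathcal{C}$ whose bipartisation is $G_0$; let $(A,B)$ be the bipartition with $|A|=|B|=N$. The induced subgraphs $G[A]$ and $G[B]$ inherit $H^*$-freeness, so by Theorem~\ref{EH} each contains a homogeneous subset (clique or stable set) of size at least $m:=2^{c(H^*)\sqrt{2\log N}}$. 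After restriction one may assume that $A''\subseteq A$ and $B''\subseteq B$ are each homogeneous with $|A''|=|B''|=m$.

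The bipartite graph $G'$ between $A''$ and $B''$ in $G$ is an induced bipartite subgraph of $G_0$, hence inherits the absence of large complete or empty bipartite subgraphs. I would then apply Theorem~\ref{EHP} to $G'$ with $(U_1,U_2)=(H_\ell,H_r)$: either $G'$ contains $H$ as an induced bipartite subgraph with $H_\ell\subseteq A''$ and $H_r\subseteq B''$, or $G'$ contains a complete or empty bipartite subgraph of side size $\lfloor(m/\ell)^{1/k}\rfloor$. Choosing $N$ large enough so that $(m/\ell)^{1/k}>2\log N$, which is possible because $m$ grows faster than any polynomial in $\log N$, rules out the second alternative by the defining property of $G_0$. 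Hence $G'$ contains $H$ in the prescribed orientation, and since $A''$ and $B''$ are each homogeneous, the induced subgraph of $G$ on the realised copy of $H_\ell\cup H_r$ is one of $H^{0,0},H^{1,0},H^{0,1},H^{1,1}$, which therefore belongs to $\mathcal{C}$ by heredity.

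The main obstacle is reconciling the two size regimes: plain Ramsey would give only logarithmic homogeneous sets, which is below the $\lfloor 2\log N\rfloor$ threshold produced by Erd\H{o}s--Hajnal--Pach; this forces the use of the stronger sub-polynomial bound of Theorem~\ref{EH}, and hence the reduction to a proper hereditary class so that some $H^*$ is available. One also needs to verify the mild side condition $m>\ell^{k+1}$ of Theorem~\ref{EHP}, which is automatic for $N$ sufficiently large.
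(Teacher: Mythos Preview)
Your proof is correct and follows essentially the same route as the paper: build the pseudo-random bipartite graph $G_0$ via Lemma~\ref{lem:largebipartite}, lift it to some $G\in\mathcal C$ via Lemma~\ref{lem:fullcrossing}, use Theorem~\ref{EH} to extract homogeneous sets on each side, and finish with Theorem~\ref{EHP}. The one noteworthy difference is that the paper argues by contradiction---assuming none of $H^{0,0},H^{1,0},H^{0,1},H^{1,1}$ lies in $\mathcal C$---which immediately makes $H=H^{0,0}$ itself a forbidden induced subgraph and lets them apply Theorem~\ref{EH} with $c(H)$; this avoids your case split on whether $\mathcal C$ is the class of all graphs and the introduction of an auxiliary $H^*$.
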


\

\begin{proof}
Suppose by contradiction that Lemma~\ref{lem:oneoffour} is false for $H$ with $|H_\ell| \le |H_r|=k$. Let $c(H)$ be the constant from Theorem \ref{EH} and pick $n$ large enough so that $2^{c(H)\sqrt{2\log n}}>k^{k+1}$, and $(2^{\frac{c(H)}{k}\sqrt{2\log n}})/k^{\frac{1}{k}} > 2\log n$ and $n$ satisfies the condition of Lemma~\ref{lem:largebipartite}. Let $G_0$ be a bipartite graph as in Lemma \ref{lem:largebipartite}, \emph{i.e.} $G_0$ has $n$ vertices on both sides and does not contain a complete or an empty bipartite graph with $2\log n$ vertices on each side.
By Lemma \ref{lem:fullcrossing}, $\mathcal C$ contains a graph $G$ whose bipartisation is $G_0$. Let $A,B$ certify this bipartisation.

Since $G$ contains no copy of $H^{0,0}$, neither does $G[A]$.
So by Theorem \ref{EH}, $G[A]$ contains a clique or stable set $A'$ of size at least $n' = 2^{c(H)\sqrt{2\log n}}$. Similarly, $G[B]$ also contains a clique or stable set $B'$ of this size. Assume that $A'$ and $B'$ induce stable sets (respectively, $A'$ induce a stable set and $B'$ a clique\footnote{The case with $A'$ a clique and $B'$ a stable set is symmetric.} and $A'$ and $B'$ induce cliques). By assumption, since the class $\mathcal{C}$ is closed under induced subgraphs, $G[A'\cup B']$ contains no copy of $H^{0,0}$ (respectively, $H^{1,0}$ and $H^{1,1}$). Hence the bipartisation of $G[A' \cup B']$ contains no copy of $H$. So by Theorem \ref{EHP} and since $n'>k^{k+1}$, the bipartisation of $G[A' \cup B']$ contains a complete bipartite graph or an empty bipartite graph where each bipartition has size
\[
 \left(\frac{n'}{k}\right)^{\frac{1}{k}} = \frac{2^{\frac{c(H)}{k}\sqrt{2\log n}}}{k^{\frac{1}{k}}} > 2\log n
\]
which is a contradiction to $G_0$ having no such subgraph.
\qquad \end{proof}

\

\begin{proofof}{Theorem \ref{thm:allbip}}
Let $H_n$ be the disjoint union of every bipartite graphs of size at most $n$.
For every $n$, Lemma~\ref{lem:oneoffour} ensures that $H_n^{a_n,b_n}$ is in $\mathcal{C}$ (for some $a_n, b_n \in \{0,1\}$) and hence there exist $a,b\in \{0,1\}$ for which $H_n^{a,b}$ is in $\mathcal{C}$ for infinitely many values of $n$. 

If $a=b=0$, all bipartite graphs are in $\mathcal C$; if $a\neq b$, all split graphs are in $\mathcal C$ and if $a=b=1$, all co-bipartite graphs are in $\mathcal C$: indeed let $H^{a,b}$ be a bipartite graph on $n$ vertices (resp. split graph, co-bipartite graph, depending on the value of $a$ and $b$). Then there exists $n'\geq n$ such that $H_{n'}^{a,b}$ is in $\mathcal{C}$. But $H^{a,b}$ is an induced subgraph of $H_{n'}$ so $H^{a,b}$ is an induced subgraph of $H_{n'}^{a,b}$. The theorem follows.
\end{proofof}

\

Foucaud \cite{F13} proved that \idcprob\ is $\log$-APX-hard for bipartite graphs, split graphs and co-bipartite graphs. So the following is a direct corollary of Theorem~\ref{thm:allbip}.

\

\begin{corollary}
\idcprob\ is $\log$-APX-hard when the input graph is restricted to an hereditary class of graphs with infinite VC-dimension.
\end{corollary}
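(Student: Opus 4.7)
The plan is to assemble the corollary from two ingredients already in place: Theorem \ref{thm:allbip}, which confines any hereditary class $\mathcal{C}$ of infinite VC-dimension to contain one of three canonical subclasses, and Foucaud's theorem from \cite{F13}, which provides $\log$-APX-hardness for \idcprob\ on each of these three subclasses. The whole argument should fit in a few lines because the containments are full subclass containments, not merely ``witness graph'' containments, so there is no gadget to design.

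More concretely, I would first invoke Theorem \ref{thm:allbip} to conclude that $\mathcal{C}$ contains \emph{all} bipartite graphs, \emph{all} split graphs, or \emph{all} co-bipartite graphs. Call this subclass $\mathcal{D}\subseteq \mathcal{C}$. By \cite{F13}, \idcprob\ restricted to $\mathcal{D}$ is $\log$-APX-hard, so there is a $\log$-APX problem $P$ together with an AP-reduction $(f_{\mathcal{D}},g_{\mathcal{D}},\alpha)$ from $P$ to \idcprob\ on $\mathcal{D}$.

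To transfer this hardness to \idcprob\ on $\mathcal{C}$, I would compose $(f_{\mathcal{D}},g_{\mathcal{D}},\alpha)$ with the identity AP-reduction from \idcprob\ on $\mathcal{D}$ to \idcprob\ on $\mathcal{C}$: since $\mathcal{D}\subseteq \mathcal{C}$, every instance produced by $f_{\mathcal{D}}$ is already a legitimate instance of \idcprob\ on $\mathcal{C}$ with the same optimum value, and any feasible identifying code returned for the $\mathcal{C}$-instance is also a feasible identifying code for the $\mathcal{D}$-instance with the same size. The ratio $r$ and the constant $\alpha$ carry over unchanged, so the composition is an AP-reduction from $P$ to \idcprob\ on $\mathcal{C}$, which is the definition of $\log$-APX-hardness for the latter.

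There is no real obstacle here; the only subtlety is checking that the trivial identity reduction does satisfy the three conditions of an AP-reduction (polynomial-time computability, preservation of feasibility, and the $(1+\alpha(r-1))$ guarantee). All three are immediate because the class containment $\mathcal{D}\subseteq\mathcal{C}$ preserves both the instance and every candidate solution verbatim. Hence the corollary follows, and, as remarked in the excerpt, this also yields NP-hardness of approximating \idcprob\ within a factor $c\log n$ on any such class $\mathcal{C}$ for some $c>0$.
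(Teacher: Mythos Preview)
Your proposal is correct and follows exactly the paper's approach: the paper simply states that the corollary is a direct consequence of Theorem~\ref{thm:allbip} together with Foucaud's result~\cite{F13} that \idcprob\ is $\log$-APX-hard on bipartite, split, and co-bipartite graphs. Your only addition is to spell out explicitly that the inclusion $\mathcal{D}\subseteq\mathcal{C}$ yields an identity AP-reduction, which the paper leaves implicit.
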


\section{Inapproximability for $C_4$-free bipartite graphs}
\label{sec:approxC4}

In this section, we examine the complexity of approximating \idcprob\ in classes of finite VC-dimension. Previous results suggest that all these classes may have a constant factor approximation algorithm : this is the case for line graphs~\cite{FoucaudGNPV13}, planar graphs~\cite{SlaterR84} or unit interval graphs (since any solution has size at least $\frac{n}{2}$) for instance.

However, we show that this intuition is false: the class $\mathcal{C}$ of $C_4$-free bipartite graphs (whose VC-dimension is bounded by 2) does not admit such an approximation algorithm. In fact, \idcprob\ in $\mathcal{C}$ is hard to approximate to within a $c \log n$ factor (for some $c>0$) in polynomial time, unless $NP\subseteq ZTIME(n^{O(\log \log n)})$.

\

\begin{observation}\label{obs:vcc4free}
The class of $C_4$-free bipartite graphs has VC-dimension at most $2$.
\end{observation}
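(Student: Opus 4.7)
The plan is to argue by contradiction: assume that $G$ is a $C_4$-free bipartite graph admitting a shattered triple $X=\{x_1,x_2,x_3\}$, and derive either a forbidden induced $C_4$ or an edge incompatible with the bipartition. Two standing facts I will use repeatedly are (i) in any $C_4$-free graph, two distinct vertices share at most one common neighbour (otherwise they close a $4$-cycle), and (ii) in a bipartite graph, two adjacent vertices share no common neighbour at all.

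First I would examine the vertex $v$ realizing the full trace $N[v]\cap X=X$. If $v\notin X$, then $v$ is adjacent to each $x_i$ and is in particular a common neighbour of every pair. Pairing this with the vertex $u$ realizing the trace $\{x_1,x_2\}$ yields a quick contradiction by splitting on whether $u\in X$: if $u\notin X$, then $u$ and $v$ are two distinct common neighbours of $x_1$ and $x_2$, producing a $C_4$; and if $u\in X$, the definition of its trace forces one of the edges $x_1x_2$, $x_1x_3$ or $x_2x_3$, which, under bipartiteness, is incompatible with $v$ being simultaneously adjacent to the two endpoints of that edge. Hence $v\in X$, and after relabelling we may take $v=x_1$, so that $x_1\sim x_2$ and $x_1\sim x_3$. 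Bipartiteness then puts $x_2$ and $x_3$ on the side opposite $x_1$, in particular $x_2\not\sim x_3$.

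To finish, I would look at the vertex $w$ realizing the trace $\{x_2,x_3\}$. Since $x_1$ is already a common neighbour of $x_2$ and $x_3$, the case $w\notin X$ supplies a second distinct common neighbour and closes a $C_4$ on $x_2,w,x_3,x_1$. The alternatives $w=x_2$ and $w=x_3$ would force the edge $x_2x_3$, contradicting bipartiteness, and $w=x_1$ is impossible because the trace of $x_1$ equals $X$, not $\{x_2,x_3\}$. All branches collapse, so no triple can be shattered, and the VC-dimension of the closed-neighbourhood hypergraph is at most $2$.

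The only real subtlety is the uniform handling of the case where the trace-realizing vertex lies in $X$ itself: whenever $u\in X$, its own membership automatically contributes to its trace, so the split ``$u\in X$ vs.\ $u\notin X$'' must be tracked at every step, and depending on the branch one invokes either $C_4$-freeness (to forbid two common neighbours) or bipartiteness (to forbid an edge among the $x_i$'s coexisting with a common neighbour adjacent to both). Beyond this bookkeeping, the proof is essentially the ``two common neighbours create a $C_4$'' principle applied twice.
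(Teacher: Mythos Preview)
Your argument is correct, but the paper's own proof is a one-liner that you could have used instead: a bipartite graph has no triangles, so a $C_4$-free bipartite graph has girth at least~$5$, and Lemma~\ref{lem:vc} already establishes that graphs of girth at least~$5$ have VC-dimension at most~$2$. Your direct case analysis is essentially an unrolled and slightly more elaborate version of that girth-$\ge 5$ argument, with bipartiteness standing in for triangle-freeness at each step.

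One small remark: your ``standing fact~(i)'' as written is not true for arbitrary (induced-)$C_4$-free graphs --- in $K_4$, any two vertices share two common neighbours. What makes it valid here is precisely that in a \emph{bipartite} graph every $4$-cycle is induced (a chord would create a triangle), so having two common neighbours does yield an induced $C_4$. Your applications of~(i) all live in this bipartite setting, so the reasoning goes through, but the fact should be stated with that hypothesis.
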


\

\begin{proof}
Let $G$ be a $C_4$-free bipartite graph. Then it has no triangle and no $C_4$, so we can apply the result of Lemma~\ref{lem:vc} for graphs of girth at least $5$.
\qquad \end{proof}

\

We provide a polynomial time gap preserving reduction (in fact, an AP-reduction) from the following minimization problem:

\

\begin{problem}
\textsc{Set cover with intersection $1$} (\scone)\\
\textbf{Instance:} A set $X$ and a family $S$ of subsets of $X$ where any two sets in $S$ intersect in at most one element.\\
\textbf{Solution:} A subset $S'$ of sets in $S$ whose union contain $X$.\\
\textbf{Measure:} The size of $S'$.
\end{problem}

\

Anil Kumar, Arya and Hariharan~\cite{KAR} have shown that this problem cannot be approximated to within a $c \log n$ factor (for some $c>0$) in polynomial time, unless $NP\subseteq ZTIME(n^{O(\log \log n)})$.

\

\begin{theorem}\label{thm:C4bip}
  \idcprob\ with input restricted to $C_4$-free bipartite graphs cannot be  approximated to within a $c \log n$ factor (for some $c>0$) in polynomial time, unless $NP\subseteq ZTIME(n^{O(\log \log n)})$ where $n$ is the size of the input.
\end{theorem}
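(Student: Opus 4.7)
The plan is to establish $\log$-APX-hardness via an AP-reduction from \scone\ to \idcprob\ restricted to $C_4$-free bipartite graphs, and then invoke the $c\log n$ inapproximability of \scone\ due to~\cite{KAR}. The attractive feature motivating this choice of source problem is that the intersection-1 condition of \scone\ corresponds \emph{exactly} to $C_4$-freeness in bipartite incidence graphs: a $C_4$ with two sets on one side and two elements on the other is precisely a pair of sets sharing two elements.

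Given an instance $I=(X,S)$ of \scone, I would first build the bipartite incidence graph $G_0$, with vertex $v_s$ on one side for each $s\in S$, vertex $v_x$ on the other side for each $x\in X$, and edge $v_sv_x$ iff $x\in s$. By the observation above, $G_0$ is $C_4$-free. Then I would augment $G_0$ into $G=f(I,r)$ by attaching a small forcing gadget to each element vertex, the simplest being a single pendant $y_x$ adjacent to $v_x$. This ensures three properties at once: (a) $G$ is twin-free (pendants break element twins and closed neighbourhoods of vertices on opposite sides are automatically distinct); (b) $G$ remains $C_4$-free (a pendant has only one neighbour, so it cannot lie on any cycle); and (c) for any identifying code $C$ of $G$ and every $x\in X$, some $v_s$ with $x\in s$ belongs to $C$, because $N[v_x]\triangle N[y_x]=\{v_s:x\in s\}$ is the only possible separator for the pair $(v_x,y_x)$. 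Property (c) is the crux: $\{s:v_s\in C\}$ is then necessarily a set cover of $(X,S)$.

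The AP-reduction is completed with $g(I,r,C):=\{s:v_s\in C\}$. One direction is immediate, since $|g(C)|\le |C|$ and $g(C)$ is a set cover. In the converse direction, given a set cover $S'$ one extracts an identifying code of size $|S'|+O(|X|)$ by taking $\{v_s:s\in S'\}$ together with the vertices needed to dominate every pendant $y_x$. Consequently $|OPT_{IC}(G)|=|OPT_{SC}(I)|+\Theta(|X|)$, and a factor-$r$ algorithm for \idcprob\ yields a factor-$(1+\alpha(r-1))$ algorithm for \scone\ for a suitable constant $\alpha$, provided the hard \scone\ instances from~\cite{KAR} satisfy $|X|=O(|OPT_{SC}(I)|)$. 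The conclusion then follows by combining the reduction with the $c\log n$ inapproximability of \scone\ under $NP\not\subseteq ZTIME(n^{O(\log\log n)})$, noting that $\log|V(G)|=\Theta(\log n)$.

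The main obstacle is the severity of the $C_4$-free restriction: essentially any multi-neighbour gadget glued onto both sides of the bipartite graph immediately creates a $4$-cycle, so only pendant or tree-like attachments are admissible. Consequently one is forced to design forcing gadgets within a narrow structural budget, and to account carefully for the $\Theta(|X|)$ overhead of dominating the pendants. The delicate point is verifying (or arranging, by a mild padding step on the \scone\ instance) that $|X|$ grows at most linearly in $|OPT_{SC}(I)|$, so that this overhead is absorbed in the AP-reduction rather than destroying the logarithmic gap.
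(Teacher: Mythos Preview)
Your proposal identifies the right source problem and the right starting observation: the incidence bipartite graph of a \scone\ instance is $C_4$-free, and a pendant $y_x$ at each element vertex forces $N[v_x]\triangle N[y_x]=\{v_s:x\in s\}$, so any identifying code contains a set cover. All of this is correct and is also the backbone of the paper's reduction.

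The genuine gap is the sentence you flag yourself at the end. With a single copy of the incidence graph and one pendant per element, you get $OPT_{IC}\le OPT_{SC}+\Theta(|X|)$ and $SOL_{SC}\le SOL_{IC}$, hence from $SOL_{IC}\le r\cdot OPT_{IC}$ only $SOL_{SC}\le r\cdot OPT_{SC}+r\cdot\Theta(|X|)$. For this to be an AP-reduction you need $|X|=O(OPT_{SC})$, and you propose to obtain this by ``a mild padding step'' on the \scone\ instance. But this is exactly the hard part, not a mild one. In the hard \scone\ instances coming from~\cite{KAR} (and Feige's construction underneath), the optimum is much smaller than $|X|$; any padding that forces $|X|=O(OPT_{SC})$ (e.g.\ adding many singleton elements) simultaneously collapses the $\Theta(\log n)$ gap between the yes and no cases, because adding $M$ to both sides of ``$OPT\le k$ vs.\ $OPT\ge ck\log n$'' drives the ratio towards~$1$ once $M$ dominates $k\log n$.

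The paper resolves this not by padding the \scone\ instance but by taking $\ell=2n^2-1$ disjoint copies $G_1,\dots,G_\ell$ of the incidence graph inside the \idcprob\ instance, together with auxiliary sets $X_1',X_2',Z$ (the latter wired via an edge-colouring of $K_{2n^2}$ to distinguish all set vertices without creating a $C_4$). Then each copy is forced to contain its own set cover, so $SOL_{SC}\le SOL_{IC}/\ell$, while $OPT_{IC}\le \ell\cdot OPT_{SC}+O(n^2)$. Dividing by $\ell$ makes the overhead $O(1)\le O(OPT_{SC})$, and the factor~$2$ survives. Your single-copy construction with pendants is essentially the paper's warm-up reduction to \textsc{Discriminating Code} before the amplification step; what is missing from your proposal is precisely this amplification (and the gadgetry on the $S$-side needed to make the full identifying-code version go through while staying $C_4$-free).
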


\

To give a flavour of our reduction from \scone  \ to \idcprob, we first give an easier reduction to the \textsc{Discriminating code} problem \cite{CCCH06}. The \textsc{Discriminating code} is often a way to design reductions which gives an overview of most complicated ones for \idcprob\ : indeed a discriminating code consists in identifying vertices of a set $X$ using vertices of a set $Y$.

\

\begin{problem}
\textsc{Discriminating code}\\
\textbf{Instance:} A bipartite graph $G=(X \cup Y,E)$.\\
\textbf{Solution:} A subset $Y'$ of $Y$ which dominates $X$ and such that for every pair of vertices $x_1,x_2$ of $X$, $N[x_1]\cap Y' \neq N[x_2] \cap Y'$. Such a set is called a discriminating code.\\
\textbf{Measure:} The size of $Y'$.
\end{problem}

\

\begin{lemma}\label{lem:discr}
  \textsc{Discriminating code} with input restricted to $C_4$-free bipartite graphs cannot be approximated to within a $c \log n$ factor (for some $c>0$) in polynomial time, unless $NP\subseteq ZTIME(n^{O(\log \log n)})$.
\end{lemma}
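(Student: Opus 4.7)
Plan. I reduce from $\scone$, whose $c\log n$-inapproximability (under $NP\nsubseteq ZTIME(n^{O(\log\log n)})$) is given by \cite{KAR}. This is a miniature version of the reduction that will be carried out in full for \idcprob; using the bipartite formulation of \discr lets me isolate the combinatorial heart of the argument while postponing the extra bookkeeping needed when the code lives inside the whole vertex set.

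Core construction. Given a \scone instance $(X,S)$ with intersection-1 property, the core of $G=(V_1\cup V_2,E)$ is $V_1=X$, $V_2=S$ with $xs\in E$ iff $x\in s$. Here $C_4$-freeness is immediate: an induced $C_4$ between $V_1$ and $V_2$ would consist of two distinct elements $x_1,x_2$ and two distinct sets $s_1,s_2$ with $\{x_1,x_2\}\subseteq s_1\cap s_2$, contradicting $|s_1\cap s_2|\le 1$. Any gadget vertices added below will be attached by pendant-like edges, so no new 4-cycle is created.

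From code to cover. This direction is essentially free and gives the map $g$ of the AP-reduction. If $Y'$ is a discriminating code of $G$, then $Y'$ must dominate every vertex of $V_1$ identified with an element of $X$, so $Y'\cap S$ is a set cover of $X$ of size at most $|Y'|$. Hence an $r$-approximate discriminating code directly yields an $r$-approximate set cover.

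From cover to code. This is the technical heart. The intersection-1 property has a very clean consequence in the bipartite setting: in a $C_4$-free bipartite graph any two vertices of $V_1$ share at most one common neighbour in $V_2$. Consequently, as soon as every element of $V_1$ has at least two neighbours in $Y'$, all pairs are automatically separated and $Y'$ is a discriminating code. The plan is therefore to enrich the core with a small gadget (adding, for each element $x$ whose degree in $S$ is one, a private singleton "dummy" set containing only $x$, together with a few auxiliary pendant vertices attached to each $s\in S$ to force a second cover) so that any set cover of $(X,S)$ can be polynomially extended into a discriminating code of $G$ at only a constant multiplicative cost, while the minimum discriminating code remains $\Theta$(minimum set cover). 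Because the gadget additions are pendants, $C_4$-freeness is preserved.

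AP-reduction and conclusion. Bundling the construction $f$, the extraction $g$ (take $Y'\cap S$ and, if needed, remove any gadget sets), and the constant blow-up $\alpha$ of Step 3 into an AP-reduction transfers the $c\log n$-inapproximability of $\scone$ to \discr restricted to $C_4$-free bipartite graphs, under the same complexity assumption, which proves the lemma.

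Main obstacle. The delicate step is the cover-to-code direction: engineering a gadget that keeps $G$ bipartite and $C_4$-free while enforcing a constant-factor relationship between the two optima. Naive duplications (e.g.\ copying every set) destroy intersection-1 and hence $C_4$-freeness; blindly adding singletons for every element inflates the optimum by $|X|$. The intersection-1 hypothesis is precisely what lets one trade the pairwise separation requirement for the much weaker double-coverage requirement, and the gadget must be designed to exploit this trade-off locally.
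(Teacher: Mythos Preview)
Your code-to-cover direction is fine, but the cover-to-code direction---which you yourself flag as the main obstacle---does not go through, and no ``local'' gadget of the kind you sketch can repair it.

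Consider the \scone\ instance $X=\{1,\dots,n\}$, $S=\{s_0,s_1,\dots,s_n\}$ with $s_0=X$ and $s_i=\{i\}$. Here $OPT_{SC}=1$. In the core bipartite graph, any discriminating code $Y'\subseteq S$ that uses $s_0$ gives every element the same trace $\{s_0\}$, so to separate the $n$ elements pairwise you must include at least $n-1$ of the singletons; hence $OPT_{DC}\ge n-1$. Your promised gadget cannot close this $\Theta(n)$ gap: any new set on the $V_2$-side adjacent to two elements $i,j$ would form a $C_4$ with $s_0$, so every added set is a singleton on $X$ and still separates only one element from the rest; pendant vertices on the $V_1$-side attached to some $s$ only force $s$ into the code and make things worse. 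Thus the claim that ``the minimum discriminating code remains $\Theta(\text{minimum set cover})$'' is false in general, and the AP-reduction collapses.

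The paper's proof avoids this entirely by \emph{replication}: it takes $\ell=2n^2-1$ disjoint copies $G_1,\dots,G_\ell$ of the membership graph, adds two further copies $X_1',X_2'$ of $X$ with $X_2'$ matched to $X_1'$ and to each $X_i$, and works with this larger $C_4$-free bipartite graph. Dominating $X_1'$ forces $X_2'\subseteq C$, and separating each $x_i\in X_i$ from its twin $x'\in X_1'$ forces a genuine set cover inside every $S_i$. This yields $OPT_{DC}\le n+\ell\cdot OPT_{SC}$ and $SOL_{SC}\le (SOL_{DC}-n)/\ell$, so the additive $n$ overhead is amortised over $\ell=\Theta(n^2)$ copies and becomes negligible even when $OPT_{SC}$ is a constant. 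That amortisation is exactly the idea your single-copy-plus-pendants scheme is missing.
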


\

\begin{proof}
Let $ I_{SC}=(X,S)$ be an instance of \scone. The proof is decomposed into five steps: construct an instance $I_{DC}$ of discriminating code that has polynomial size in $|I_{SC}|$; check that this instance is indeed a $C_4$-free bipartite graph; for every solution of $I_{SC}$, construct a solution of $I_{DC}$; and vice-versa; finally check that if the solution of $I_{DC}$ is not too big with respect to the optimal one, then so is the solution of $I_{SC}$. 

\

\paragraph{Construct the instance of \discr}
Let $G=(X\cup S, E)$ be the membership bipartite graph of the instance $I_{SC}$, that is to say that for every $x\in X, s\in S$, there is an edge $xs\in E$ if and only if $x\in s$. In the following, $n$ denotes the size of $X$ and we assume that $n\geq 2$ and that no $s\in S$  is connected to all of $X$ (meaning that the optimal solution to $I_{SC}$ has size at least 2). Note that in the other case, we can compute the optimal solution in polynomial time. Moreover, we assume that for every $x\in X$, there exists $s\in S$ such that $x$ belong to $s$, otherwise there is no solution. The following construction is illustrated on Figure~\ref{fig:C4bipartite}. Let $G_1, \ldots, G_{\ell}$ be $\ell=2n^2-1$ disjoint copies of $G$. Denote by $X_i\cup S_i$ the $i$-th copy of $X\cup S$.
Let $X'_1$ and $X'_2$ be  copies of $X$. For each $x''\in X'_2$, add an edge between $x''$ and its copy $x'$ in $X'_1$, and add edges between $x''$ and its copies in all $G_i$ for $i \leq \ell$. In other words, $G[X_i \cup X_2']$ induces a matching for every $i$.
Let $G_{DC}$ be this bipartite graph with parts $X_{DC}=X_1\cup ...\cup X_{\ell}\cup X'_1$ and $Y_{DC}=S_1\cup ... \cup S_{\ell}\cup X'_2$. Clearly, the size of $I_{DC}$ is polynomial in $n$ an thus in the size of $I_{SC}$.

\

\paragraph{Check that the instance is $C_4$-free}
First note that the initial graph $G$ is $C_4$-free. Indeed every $C_4$ must have two vertices in $S$ and two vertices in $C$, a contradiction since the neighbourhoods of two vertices of $S$ intersect on at most one vertex.  Further, the graph $G_{DC}$ is $C_4$-free. Indeed, no $C_4$ can contain two vertices of $S_1\cup \dots \cup S_l$ since any vertex $s\in S_i$ only has neighbours in $X_i$, and $G_{DC}[X_i \cup S_i]$ is a copy of the $C_4$-free graph $G$. Moreover, two vertices $x''\in X_2'$ and $s\in S_i$ have at most one common neighbour $x_i\in X_i$, the copy of $x''$. Finally, each pair of vertices of $X_2'$ have disjoint neighbourhoods. Thus no vertex can be part of a $C_4$.

\

\paragraph{Transforming a solution of $I_{SC}$ into a solution of $I_{DC}$}
Let $D$ be a set cover of $S$ of size $SOL_{SC}$. Construct $C$ as the union of $\ell$ copies of $D$ (one for each $G_i$), and finally add $X'_2$. Then $C$ is dominating $X_{DC}$ since $X'_2$ is, moreover $C$ is separating all the pair of vertices of $X_{DC}$. Indeed, two vertices $x_i, y_j$ inherited from two different elements $x,y\in X$ are separated by $x''\in X'_2$, the copy of $x$ in $X'_2$. Two vertices $x_i \in X_i, x_j\in X_j$ with $i\neq j$, or two vertices $x_i\in X_i, x'\in  X'_1$ inherited from the same element $x\in X$ are separated by a neighbour $s_i\in S_i$ of $x_i$, where $s_i$ is the $i$-th copy of an element $s\in D$ containing $x$. 
Consequently, $C$ is indeed a discriminating code. 
For later use, observe that if $D$ is the optimal solution of \scone \ of size  $OPT_{SC}$, we can derive $OPT_{DC}\leq n+ \ell \cdot OPT_{SC}$. 

\

\paragraph{Transforming a solution of $I_{DC}$ into a solution of $I_{SC}$}
Let $C$ be a solution of $I_{DC}$ of size $SOL_{DC}$. We construct a set cover candidate $D_1=S_1\cap C$. Every vertex $x_1\in X_1$ is dominated and separated from its copy $x'$ in $X'_1$, and $x''\in X'_2$ cannot achieve this goal, thus there exists $s\in S_1\cap C$ which is linked to $x_1$. Thus $D_1$ is a set cover. The same can be done for each $i\leq \ell$, and we choose $D$ as the minimum size such constructed set cover. For later use, observe that $X_2'\subseteq C$ since $X_1'$ is dominated. Consequently if $SOL_{SC}$ is the size of $D$, we have $|X_2'| + \ell \cdot SOL_{SC} \leq SOL_{DC}$ or equivalently $SOL_{SC} \leq \frac{SOL_{DC} -n}{\ell}$. 

\

\paragraph{Concluding on the size of the solutions} Now suppose that we can obtain a solution $SOL_{DC}$ of \discr satisfying $SOL_{DC} \leq r \cdot OPT_{DC}$ for some value $r$. The above discussion gives

$$SOL_{SC}\leq \frac{SOL_{DC}-n}{\ell} \leq \frac{r \cdot OPT_{DC}-n}{\ell} \leq \frac{r(n+\ell \cdot OPT_{SC})-n}{\ell}\leq 2 r \cdot OPT_{SC}$$

In particular if $r=c' \log n$ for some well-chosen constant $c'$, we obtain a contradiction with \scone \ approximation hardness.
\qquad \end{proof}

\

 \begin{figure}
\begin{center}
\vspace{10pt}
\includegraphics{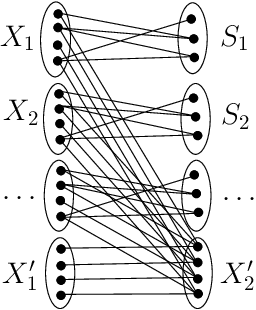}
\end{center}
\caption{Construction of the proof of Lemma~\ref{lem:discr}.}
\label{fig:C4bipartite}
\end{figure}

Let us now adapt this reduction into a reduction to identifying codes.

\

\begin{proofof}{Theorem~\ref{thm:C4bip}}
Let $I_{SC}$ be an instance of \scone.

\

\paragraph{Construct the instance of \idcprob} First construct the same graph $G_{DC}$ as in the proof of Lemma~\ref{lem:discr}. Now for identifying codes, we need to identify vertices in both sides and not only on the side of $X_{DC}$.
For that, we add to $G_{DC}$ a set $Z=\{z_1, \ldots, z_{2n^2}\}$ in part $X_{DC}$. We have to be careful when we connect the vertices of $Z$ to the vertices of the graph since we do not want to create a $C_4$. We aim at choosing edges between $Z$ and $S_1\cup ... \cup S_{\ell}$ such that each vertex $s\in S_1\cup ... \cup S_{\ell}$ is adjacent to exactly two vertices of $Z$, and no two vertices $s,s'\in S_i$ share a neighbour in $Z$. The following claim (whose proof is postponed at the end of the section) reaches the goal:

\

\begin{claim} \label{C4-free}
There exists a numbering of the vertices in $S_1\cup ... \cup S_{\ell}$ such that:
\begin{itemize}
\item Each vertex $s\in S_1\cup ... \cup S_{\ell}$ is numbered $s_{i,j}$ with $i<j\in \{1, \ldots, 2n^2\}$, where the pair $\{i,j\}$ is distinct for every vertex.
\item Two vertices $s_{j,k}$ and $s_{j',k'}$ cannot both belong to the same set $S_i$ if one of $j,k$ is equal to one of $j',k'$.
\end{itemize}
\end{claim}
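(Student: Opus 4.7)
My plan is to reinterpret the claim as a question about edge-colourings of a complete graph. Think of the labels $\{1,2,\ldots,2n^2\}$ as vertices of the complete graph $K_{2n^2}$, and a pair $\{i,j\}$ assigned to a vertex $s\in S_\alpha$ as an edge of $K_{2n^2}$. Under this dictionary, the first bullet point (globally distinct pairs) requires that the edges chosen across the various $S_\alpha$'s are pairwise distinct, and the second bullet point (no shared coordinate inside one $S_\alpha$) requires that the edges chosen inside a single $S_\alpha$ form a matching of $K_{2n^2}$. So it is enough to exhibit pairwise edge-disjoint matchings $M_1,\ldots,M_\ell$ in $K_{2n^2}$ with $|M_\alpha|\ge |S_\alpha|$ for each $\alpha$, and then realise each $S_\alpha$ inside its own $M_\alpha$.

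The first step is a size bound on $S$. Since $|X|=n$ and any two members of $S$ share at most one element, no two-element subset of $X$ is contained in two different sets of $S$; hence the number of sets in $S$ of size at least $2$ is at most $\binom{n}{2}$, while the number of sets of size $0$ or $1$ is trivially at most $n+1$ (and we may of course discard the empty set). This gives $|S|\le \binom{n}{2}+n=\tfrac{n(n+1)}{2}\le n^2$ for $n\ge 2$, and consequently $|S_\alpha|=|S|\le n^2$ for every $\alpha\in\{1,\ldots,\ell\}$.

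The second step is to produce the matchings. Because $2n^2$ is even, $K_{2n^2}$ has chromatic index $2n^2-1=\ell$, so it admits a $1$-factorisation $M_1,\ldots,M_\ell$ into $\ell$ pairwise edge-disjoint perfect matchings, each of cardinality exactly $n^2$. I would assign the matching $M_\alpha$ to the copy $S_\alpha$; since $|S_\alpha|\le n^2=|M_\alpha|$, any injection from $S_\alpha$ into $M_\alpha$ works, and writing the image of $s\in S_\alpha$ as $\{i,j\}$ with $i<j$ labels this vertex as $s_{i,j}$. The two required properties are then immediate: the global distinctness of pairs follows from the edge-disjointness of the $M_\alpha$'s, and the forbidden-coincidence property inside each $S_\alpha$ follows from $M_\alpha$ being a matching of $K_{2n^2}$.

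The only place where one has to pay a little attention is the bound $|S|\le n^2$; once it is in hand, the rest is just repackaging the classical $1$-factorisation of $K_{2n^2}$, so I do not anticipate a genuine obstacle.
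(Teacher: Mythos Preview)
Your proposal is correct and is essentially the same argument as the paper's: both bound $|S|\le n^2$ via the intersection-$1$ condition and then appeal to the $1$-factorisation (equivalently, a proper edge colouring with $2n^2-1$ colours) of $K_{2n^2}$, assigning one perfect matching of size $n^2$ to each copy $S_\alpha$ and injecting $S_\alpha$ into it. The only cosmetic difference is that the paper phrases the decomposition as an edge colouring rather than a $1$-factorisation.
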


\

Using the numbering of the claim, we just have to add the edges $z_ks_{k,l}$ and $z_ls_{k,l}$ for every $k,l\in \{1, \ldots, 2n^{2}\}$. Note that every vertex of $S_i$ is connected to precisely two vertices of $Z$, and that every vertex $z\in Z$ has at most one neighbour in each $S_i$. Let $G_{IC}$ be this new graph. It has polynomial size in $n$.

\

\paragraph{Check that the instance is $C_4$-free}  Since we only add edges from $Z\subset X_{DC}$ to $S_1 \cup \dots \cup S_\ell \subset Y_{DC}$, the graph is indeed bipartite. Since $G_{DC}$ was $C_4$-free, any hypothetical $C_4$ must intersect $Z$, say in $z_k \in Z$. $z_k$ share with any $z_l\in Z$ at most one common neighbour $s_{k,l}$, so the $C_4$ must intersect $X_i$ for some $i$, or $X'_1$. On the one hand, vertices in $X'_1$ have degree one. On the other hand, $x_i$ only has neighbours in $S_i$, and $z_k$ has one only neighbour in each $S_i$, so they cannot be in the same $C_4$.

\

\paragraph{Transforming a solution of $I_{SC}$ into a solution of $I_{IC}$}
A set $C$ containing $X_1 \cup X'_2 \cup Z$  is a good candidate to be an identifying code because it has the following properties:
\begin{itemize}
\item For every $z_k\in Z$, $z_k$ is identified by $z_k$ being the only vertex of $Z\cap N[z_k] \cap C$. 
\item For every copy $x''\in X'_2$ of an element $x\in X$, $x''$ is dominated by $\{x'', x_1\}$ in $C$ a where $x_1\in X_1$ is the copy of $x$. Thus it is separated from all the other vertices except maybe $x_1$.
\item For every $s_{k,l}\in S_1\cup \ldots \cup S_{\ell}$, $s_{k,l}$ is identified by $\{z_k, z_l\}$.
\item For every $x'\in X_1'$, $x'$ is dominated by $x''\in X_2'$.
\item For every $x_i\in X_i$, $x_i$ is dominated by $x''\in X_2'$
\end{itemize}

Thus $C$ is a dominating set, and the only sets of vertices that may be not separated are of the form $\{x'\}\cup \{x_2, \ldots, x_{\ell}\}$  and $\{x'',x_1\}$ for any element $x\in X$.

Let $D$ be a set cover of the initial instance and $D_1,...,D_{\ell}$ be the respective copies in the graphs $G_i$. Then  $C=D_1\cup...\cup D_{\ell}\cup X_1 \cup X'_2\cup Z$ is an identifying code of $G_{IC}$. Indeed, every vertex $x_i$ is separated from $x'$, $x''$ and $x_j$ ($i\neq j$) by the element of $S_i$ that covers it in the set cover $D_i$.
Hence any solution for \scone \ of size $SOL_{SC}$ gives a solution for \idcprob\ of size $SOL_{IC}=\ell \cdot SOL_{SC}+2n+2n^2 $. In particular $OPT_{IC}\leq \ell \cdot OPT_{SC}+2n+ 2n^2$.

\

\paragraph{Transforming a solution of $I_{IC}$ into a solution of $I_{SC}$}
Let $C$ be an identifying code of $G_{IC}$. 
We define $D_i=C\cap S_i$ as a set cover candidate. Unfortunately, $D_i$ may not be a set cover, in which case we iteratively modify $C$ until all $D_i$ meet the condition, starting with $D_1$.
If $D_i$ is not a set cover of $X_i$, then there is a vertex $x_i$ not covered. This vertex must be separated with its copy $x'$ in $X'_1$, hence $x_i$ (case 1) or $x'$ (case 2) must belong to $C$ (if both occur, case 1 has priority on case 2). Then choose any neighbour $s\in S_i$ of $x_i$, add this vertex to $C$ and remove $x_i$ (in case 1) or $x'$ (in case 2). We thus get a new set $C'$ and claim that $C'\cup Z\cup X_1\cup X_2'$ is an identifying code. Thanks to the above discussion, we just have to show that the sets $\{x'\}\cup \{x_2, \ldots, x_{\ell}\}$ and $\{x'',x_1\}$ are separated.

Observe first that $x_i$ is now separated from $x'$ and from $x_j$ by $s$ for $j\neq i$. Moreover, since $C$ was separating $x_{j_1}$ from $x_{j_2}$  for $j_1, j_2\neq i$, then $C'$ still does (because nothing changed in their neighbourhood). We also have $C'$ that separates $x'$ from $x_j$ for $j\neq i$: the vertex separating those two vertices was not $x_i$, so in case 1 it still belongs to $C'$. In case 2, we have removed $x'$ but then $x_i$ was not in $C$, and $C$ was separating $x_i$ and $x_j$ so there exists a vertex in $(N[x_j]\setminus \{x'\})\cap C$, and this vertex separates $x_j$ from $x'$ in $C'$. Finally, $C'$ separates $x_1$ from $x''$ since we have started the process with $D_1$, hence $C'\cap S_1$ dominates $x_1$.

Therefore we can assume that all the sets $C'\cap S_i$ are set covers where $C'$ has size at most $|C|+2n+2n^2$. Since there are at most $|C|$ vertices of $C'$ which are in $S_1\cup \ldots \cup S_\ell$, it means that an identifying code with $|C|=SOL_{IC}$ vertices of $G_{IC}$ gives a solution of set cover with $SOL_{SC} \leq \frac{SOL_{IC}}{\ell}$ vertices.

\

\paragraph{Concluding on the size of the solutions}
 Assume now  that $SOL_{IC}\leq r \cdot OPT_{IC}$ for some value $r$, then:

$$SOL_{SC}\leq \frac{SOL_{IC}}{\ell} \leq \frac{r \cdot OPT_{IC}}{\ell} \leq \frac{r((2n^2-1)  OPT_{SC}+2n+2n^2)}{2n^2-1}\leq 2r \cdot OPT_{SC}$$

As before for discriminating codes, it achieves the proof of Theorem~\ref{thm:C4bip}. \vspace{10pt}
\end{proofof}

An \emph{edge colouring} of a graph with $k$ colours is a function $c: E\to \{1, \ldots, k\}$ such that no two edges sharing an endpoint are given the same colour, that is $c(uv)\neq c(uv')$ for every pair of edges $uv, uv'$.

\

\begin{proofof}{Claim \ref{C4-free}}
We first need to convince ourselves that $|S| \leq n^2$ in the instance $(X,S)$ of \scone. Indeed every pair of elements of $X$ appears in at most one $s\in S$, thus 
$|S|\leq \frac{n(n+1)}{2}\leq n^2$ (one for each pair plus $n$ additional singletons).

Now the idea of the proof is the following: we will represent our problem using a clique on $2n^2$ vertices. The vertices of the clique represent vertices of $Z$ and edges of the clique represent vertices of $S_1\cup \dots \cup S_\ell$.

 The edges of $K_{2n^2}$ can be partitioned into $2n^2-1$ perfect matchings, or equivalently there exists an edge colouring $c$ of $K_{2n^2}$ with $2n^2-1$ colours such that each colour class contain $n^2$ edges. Then label the vertices by $\{z_1, \ldots, z_{2n^2}\}$ and create a set $S'=S_1'\cup \dots \cup S'_{2n^2-1}$ of  elements $s_{j,k}$ with $j,k\in \mathbb{N}$ according to the following rule:
$$S'_i=\{s_{j,k} | c(z_j z_k)=i\} \textrm{ for every } i\in \{1, \ldots 2n^2-1\}$$
Observe that $|S'_i|=n^2$. Since a colour class is a matching, the indices of every pair of edges in a same colour class are pairwise distinct. In other words, there cannot be two vertices $s_{j,k}$ and $s_{j',k'}$ in the same set $S'_i$ if one of $j,k$ is equal to one of $j',k'$.
Now choose arbitrarily $|S|\leq n^2$ vertices in $S'_i$ to form $S_i$.
\end{proofof}

\section{Constant approximation algorithm for interval graphs}\label{sec:interval}

We now focus on the class of interval graphs and provide a constant factor approximation algorithm for \idcprob\ via a linear programming approach. More precisely we show that \idcprob\ has a $6$-approximation algorithm. The existence of a constant approximation algorithm was left open in~\cite{F13}.

Let us recall that an \emph{interval graph} is a graph which can be represented as an intersection of segments in the real line. We put an arbitrary order on the real line. The \emph{begin date of an interval $x$} is the first point $p$ of the real line (in the order) such that $p \in x$. The \emph{end date of $x$} is the last point which is in $x$. By abuse of notations, we will denote by $v$ both the vertex of the graph and the interval in the representation on the real line. Note that there exist many representations as intersections of segments for a same interval graphs, we choose arbitrarily one of them which can be found in linear time~\cite{BK}.

Let $G$ be an interval graph together with an interval representation. We denote its vertex set by $\{1, \ldots , n\}$ and $U \bigtriangledown U'$ stands for the symmetric difference of $U$ and $U'$ for $U,U'\subseteq V$.
Let us express \idcprob\ in terms of an integer program $P$, where $x_i$ is the decision variable corresponding to vertex $i$:

\renewcommand{\arraystretch}{2}
\begin{tabular}{lc}
\emph{Integer program} &$P$ \\
\emph{Objective function: } &$\min \displaystyle\sum_{i\in V} x_i$ \\
\emph{Separation constraint: } & $\displaystyle\sum_{i\in N[j] \bigtriangledown N[k]} x_i  \geq 1 \quad \forall j\neq k\in V $\\
\emph{Domination constraint: } & $\displaystyle\sum_{i\in N[j]} x_i \geq 1 \quad \forall j\in V$  \\ 
\emph{Integrality}: &$ x_i\in \{0,1\} \quad \forall i\in V $\\
\end{tabular}

Let us denote by $P^*$ the linear programming relaxation of $P$, where the integrality constraint is replaced by a non-negativity constraint $x_i\geq 0$, $\forall i\in V$. Recall that even if an integer linear program cannot be solved in polynomial time, its fractional relaxation can on the contrary be solved, using for instance the ellipsoid method. Our goal is to construct a feasible solution for $P$ of value at most $6\cdot OPT(P^*)$.
To achieve this goal, we decompose $P$ into two subproblems:

\noindent \begin{minipage}[t]{0.5\linewidth}
\centering
\noindent \begin{tabular}[t]{c}
$P_{inter}$ \\
$\min \displaystyle\sum_{i\in V} x_i$ \\
{\mathversion{bold} $\forall jk\in E$} $ \displaystyle\sum_{i\in N[j] \bigtriangledown N[k]} x_i  \geq 1$ \\ 
\parbox{0.8\linewidth}{\noindent (Separation constraints for intersecting pairs)}\\
\phantom{$\displaystyle\sum_{i\in N[j]} x_i \geq 1 \quad \forall j\in V$ 
(Domination constraints)}\\
$ x_i\in \{0,1\} \quad \forall i\in V $\\
\end{tabular}
\end{minipage}\vline
\begin{minipage}[t]{0.5\linewidth}
\centering
\begin{tabular}[t]{c}
$P_{disj}$ \\
$\min \displaystyle\sum_{i\in V} x_i$ \\
 {\mathversion{bold} $\forall jk\notin E$}  $ \displaystyle\sum_{i\in N[j] \bigtriangledown N[k]} x_i \geq 1$\\ 
\parbox{\linewidth}{(Separation constraints for non-intersecting pairs)}\\
$\displaystyle\sum_{i\in N[j]} x_i \geq 1 \quad  \forall j\in V $ 
(Domination constraints)\\
$ x_i\in \{0,1\} \quad \forall i\in V  $\\
\end{tabular}
\end{minipage}

The reason why intersecting intervals play a special role is that the symmetric difference of $N[j]$ and $N[k]$ can be expressed in some sense by an union of 2 intervals which greatly helps. We denote by $P_{inter}^*$ (resp. $P_{disj}^*$) the linear programming relaxation of $P_{inter}$ (resp. $P_{disj}$).

\

\begin{lemma}\label{P inter}
Given an optimal solution $(x_1^*, \ldots, x_n^*)$ of $P_{inter}^*$ of cost $OPT(P_{inter}^*)$, there exists a polynomial time algorithm that computes a solution of $P_{inter}$ of value at most $4\cdot OPT(P_{inter}^*)$. 
\end{lemma}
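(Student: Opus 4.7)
The plan is to exploit the geometric structure of the symmetric difference $N[j]\triangle N[k]$ when $j$ and $k$ are intersecting intervals. I claim it splits into two disjoint parts, each described by an endpoint falling in a window on the real line, which turns each half of the problem into a totally unimodular interval-covering LP.

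First I would do a short case analysis on the edges. Writing $a_i, b_i$ for the begin and end dates of interval $i$ and fixing an edge $jk$ with $a_j\le a_k$, I distinguish proper overlap ($a_j\le a_k\le b_j\le b_k$) from containment ($a_j\le a_k\le b_k\le b_j$). In both cases a direct check gives
\[
N[j]\triangle N[k] \;=\; L_{jk} \;\sqcup\; R_{jk},
\]
where $L_{jk}=\{i : b_i\in [a_j,a_k)\}$ and $R_{jk}=\{i : a_i\in W_{jk}\}$, with $W_{jk}$ equal to $(b_j,b_k]$ in the overlap case and $(b_k,b_j]$ in the containment case. The key point is that each of these two sets is cut out by a single window on the line, so membership of $i$ in $L_{jk}$ (resp.~$R_{jk}$) depends only on the right (resp.~left) endpoint of interval $i$.

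Next, given the optimal fractional solution $x^*$, the constraint $x^*(L_{jk})+x^*(R_{jk})\ge 1$ forces at least one of the two halves to be $\ge 1/2$. I would call $jk$ an $L$-edge or $R$-edge accordingly and form the two restricted covering LPs
\[
(\mathrm{LP}_L)\ \ \min \sum_i x_i \ \ \text{s.t.}\ \ \sum_{i\in L_{jk}} x_i \ge 1\ \ \forall\,L\text{-edges}\,jk,\ \ x\ge 0,
\]
and its $R$-analogue built from the $R_{jk}$'s. By construction $2x^*$ is feasible for both LPs, so each has LP-optimum at most $2\cdot OPT(P_{inter}^*)$.

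Finally, I would observe that after sorting columns by $b_i$ (resp.~$a_i$), the constraint matrix of $\mathrm{LP}_L$ (resp.~$\mathrm{LP}_R$) has the consecutive-ones property in every row, because each $L_{jk}$ and $R_{jk}$ singles out a consecutive block of columns in sorted order. Such $0/1$ interval matrices are totally unimodular, so both LPs admit integer optima computable in polynomial time, each of value at most $2\cdot OPT(P_{inter}^*)$. The union of the two integer solutions is feasible for $P_{inter}$ and has size at most $4\cdot OPT(P_{inter}^*)$. The main obstacle is carrying out the window decomposition of the symmetric difference cleanly across the two cases; once that is in hand, the rounding is just total unimodularity applied to consecutive-ones constraint matrices.
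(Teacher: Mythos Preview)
Your proposal is correct and follows essentially the same argument as the paper's own proof: the same left/right window decomposition of $N[j]\triangle N[k]$ for intersecting intervals (handled via the same overlap/containment case split), the same classification of edges by which half carries fractional mass $\ge 1/2$, and the same appeal to the consecutive-ones property to get integral optima of cost $\le 2\cdot OPT(P_{inter}^*)$ for each of the two sub-LPs. Your phrasing of the combination step as taking the union of the two integer supports is slightly cleaner than the paper's $x_i^L+x_i^R$, but the content is identical.
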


\

\begin{proof}
We follow the ideas of the proof of \cite{Gaur02} where the problem is translated in terms of Rectangle Stabbing Problem. Note that our problem can also be viewed as the transversal of 2-intervals (union of 2 intervals) and, in this respect, topological bounds can be found in \cite{K}, even if this does not provide an approximation algorithm. 

Le $x^*=(x_1^*, \ldots, x_n^*)$ be an optimal solution of $P_{inter}^*$. For every $jk\in E$, Figure \ref{fig: left right} shows how to partition $N[j] \bigtriangledown N[k]$ into two parts $L_{jk}$ (stands for Left) and $R_{jk}$ (stands for Right). The set $L_{jk}$ is composed of the intervals that end between the begin dates of $j$ and $k$, and $R_{jk}$ is composed of the intervals that begin between the end dates of $j$ and $k$. $L_{jk}$ and $R_{jk}$ are obviously disjoint subsets.

\begin{figure}
\begin{center}
\includegraphics[scale=1.2]{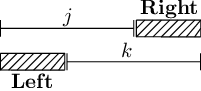} \hspace{2cm}
\includegraphics[scale=1.2]{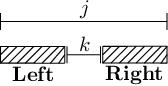}
\end{center}
\caption{Given two intersecting intervals $j$ and $k$, one can construct two areas \textbf{Left} and \textbf{Right} partitioning $N[j]\bigtriangledown N[k]$ between $L_{jk}$ the set of intervals that end in \textbf{Left}, and $R_{jk}$ the set of intervals that begin in \textbf{Right}. This figure shows how to find \textbf{Left} and \textbf{Right} depending on the configuration of $j$ and $k$: either one is included in the other, or not. }
\label{fig: left right}
\end{figure}

Let us now define two subsets of vertices $L$ and $R$ as follows:
\[L =\left\lbrace  jk\in E \ \vline \quad  \sum_{i\in L_{jk}} x_i^*\geq \frac{1}{2}\right\rbrace \qquad \textrm{and} \qquad R=\left\lbrace  jk\in E \ \vline \quad  \sum_{i\in R_{jk}} x_i^*\geq \frac{1}{2}\right\rbrace \]

Since all the constraints of $P_{inter}^*$ are satisfied by $(x_1^*, \ldots, x_n^*)$ and since for every edge $jk$, we have $L_{jk} \cup R_{jk}= N[j] \bigtriangledown N[k]$, all the edges are in $L$ or in $R$ (they can be in both of them).
Based on this, we define now the following two integer linear programs: 

\bigskip

\noindent \begin{minipage}{0.5\linewidth}
\centering
\noindent \begin{tabular}{c}
$P_{L}$ \\
$\min \displaystyle\sum_{i\in V} x_i$ \\
{\mathversion{bold} $\forall jk\in L$} $\displaystyle\sum_{i\in L_{jk}} x_i \geq 1 $\\ 
$ x_i\in \{0,1\} \quad \forall i\in V $\\
\end{tabular}
\end{minipage}\vline
\begin{minipage}{0.5\linewidth}
\centering
\begin{tabular}{c}
$P_{R}$ \\
$\min \displaystyle\sum_{i\in V} x_i$ \\
{\mathversion{bold} $\forall jk\in R$} $\displaystyle\sum_{i\in R_{jk}} x_i \geq 1$\\ 
$ x_i\in \{0,1\} \quad \forall i\in V $\\
\end{tabular}
\end{minipage}

\bigskip

According to the previous notations, we denote by $P_L^*$ (resp. $P_R^*$) the linear programming relaxation of $P_L$ (resp. $P_R$). Consider now the 0/1 matrix $M$ obtained from $P_L^*$, where each row represents an edge $jk\in L$, each column represents a vertex $i\in V$, and $M_{jk,i}=1$  if $i\in L_{jk}$, 0 otherwise. By sorting the vertices of $V$ (and thus, the columns of the matrix) by interval end date, the 1's on each row become consecutive. Indeed the 1's on the line of the constraint $jk$ correspond to intervals that end between the begin dates of $j$ and $k$ which are obviously consecutive if we sort the intervals by end date.
A matrix which has consecutive 1's on each row is said to have the \emph{interval property}. Such a matrix is totally unimodular (which means that all the squared determinants of the matrix have values $-1,0$ or $1$) and this implies that there is an optimal solution of $P_L^*$ where all the variables are integer. In particular,  $OPT(P_L^*)=OPT(P_L)$. (See~\cite{Martin99,Schrijver03} for more details about totally unimodular matrices and their use in linear programming.) 

The same holds for $P_R^*$ by sorting vertices by interval begin date. Solving $P_L^*$ and $P_R^*$ can be done in polynomial time, this gives us integer solutions $(x_1^L, \ldots, x_n^L)$ for $P_L^*$ and $(x_1^R, \ldots, x_n^R)$ for $P_R^*$ and setting $x_i=x_i^L+x_i^R$ builds a feasible solution for $P_{inter}$ of objective value $SOL(P_{inter})=OPT(P_L^*)+OPT(P_R^*)$.
Observe now that $(2x_1^*, \ldots, 2x_n^*)$ is a feasible solution for both $P_L^*$ and $P_R^*$ so $OPT(P_L^*)\leq 2\cdot OPT(P_{inter}^*)$ and $OPT(P_R^*)\leq 2 \cdot OPT(P_{inter}^*)$. This concludes the proof by $SOL(P_{inter})\leq 4 \cdot  OPT(P_{inter}^*)$.
\qquad \end{proof}

\

Let us now focus on the second subproblem:

\

\begin{lemma}\label{P disj}
Given the interval representation of $G$, one can compute in polynomial time a feasible solution for $P_{disj}$ of size at most $2 \cdot OPT(P_{disj}^*)$.
\end{lemma}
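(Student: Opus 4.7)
My plan is to follow the decomposition strategy used in Lemma~\ref{P inter}: start with an optimal fractional solution $x^*$ of $P_{disj}^*$, split the constraints into two families handled by independent integer programs with totally unimodular constraint matrices, and combine the resulting integer solutions. The improvement of the factor from $4$ (as in Lemma~\ref{P inter}) to $2$ should come from extra structure specific to disjoint pairs.

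The key structural observation is that for a pair of disjoint intervals $j, k$ with $e_j < b_k$, the symmetric difference $N[j] \bigtriangledown N[k]$ splits into two \emph{disjoint} parts: $L^d_{jk} = \{i \in N[j] : e_i < b_k\}$, the intervals in $N[j]$ ending before $b_k$, and $R^d_{jk} = \{i \in N[k] : b_i > e_j\}$, the intervals in $N[k]$ beginning after $e_j$. Note that $j \in L^d_{jk}$ and $k \in R^d_{jk}$. Moreover, for fixed $j$, as $k$ ranges over the right-disjoint intervals, the family $\{L^d_{jk}\}_k$ is nested and grows with $b_k$, so all left-side separation constraints attached to $j$ reduce to a single constraint corresponding to the right-disjoint $k$ with smallest $b_k$; a symmetric statement holds on the right.

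Building on this, I would assign each disjoint pair to a ``side'' based on $x^*$ (to the side that already carries the majority of the fractional mass of the corresponding separation constraint) and then define two integer programs $P^d_L$ and $P^d_R$ containing the left-side (resp.\ right-side) separation constraints together with the domination constraints. Sorting the intervals by end date for $P^d_L$ (resp.\ by begin date for $P^d_R$) should reveal the consecutive-$1$s property on the resulting matrices, just as in Lemma~\ref{P inter}, yielding total unimodularity and an integer LP optimum computable in polynomial time. The union of the two integer solutions is then feasible for $P_{disj}$ because every disjoint pair gets separated on at least one of its two sides.

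The main obstacle is obtaining the tighter factor $2$ rather than the factor $4$ that a literal analogue of Lemma~\ref{P inter} would produce. The gain should come from two features of the disjoint case that are not available for intersecting pairs: the sets $L^d_{jk}$ and $R^d_{jk}$ are genuinely disjoint (so the mass of $x^*$ is split, not shared), and the nested collapse per vertex reduces each side's constraints to one per vertex. Together these properties should allow a routing of pairs to sides that makes $x^*$ itself (without the scaling by $2$ that appears in the proof of Lemma~\ref{P inter}) fractionally feasible for each of $P^d_L{}^*$ and $P^d_R{}^*$, so that each integer optimum is bounded directly by $OPT(P_{disj}^*)$ and their sum by $2 \cdot OPT(P_{disj}^*)$. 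The trickiest part is making this last step airtight, since one must verify that the constraints inherited by the subproblems (particularly the domination constraints, which live on both sides) do not force a duplication of cost across the two subproblems.
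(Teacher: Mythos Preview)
Your proposal takes a fundamentally different route from the paper and, as written, contains a genuine gap.

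\textbf{Where the consecutive-$1$s claim breaks.} The reason total unimodularity works in Lemma~\ref{P inter} is that for \emph{intersecting} $j,k$ with $b_j<b_k$ one has $b_k\le e_j$, so any interval $i$ with $b_j\le e_i<b_k$ automatically satisfies $b_i\le e_i<b_k\le e_j$ and hence lies in $N[j]$; thus $L_{jk}$ is literally a range of end dates. For \emph{disjoint} $j,k$ this implication fails, and $L^d_{jk}=N[j]\setminus N[k]$ need not be an interval in end-date order. Concretely, take $j=[0,1]$, $k=[10,11]$ and three further intervals $a=[0,2]$, $b=[3,4]$, $c=[0,5]$. Sorted by end date the order is $j,a,b,c,k$, yet $L^d_{jk}=\{j,a,c\}$ skips $b$. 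So the matrix of the left-side separation constraints does \emph{not} have the interval property, and the same example shows that the domination rows $N[\cdot]$ are not intervals in either ordering. Your two subproblems $P^d_L,P^d_R$ are therefore not totally unimodular, and solving their LP relaxations does not yield integer optima.

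\textbf{Where the factor $2$ is not established.} Even setting the TU issue aside, your argument for improving $4$ to $2$ is only a heuristic. The nesting of the families $\{L^d_{jk}\}_k$ for fixed $j$ reduces the number of distinct constraints but does not by itself guarantee that $x^*$ (unscaled) is feasible for both subproblems under any routing; and once domination is duplicated on both sides you explicitly worry about double-counting without resolving it.

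\textbf{What the paper does instead.} The paper abandons the LP-decomposition idea entirely for $P_{disj}$ and uses a purely combinatorial primal: greedily pick the interval with the smallest end date, delete its closed neighbourhood, and repeat. The resulting set $S=\{s_1,\dots,s_\alpha\}$ is a maximal independent set with the extra property that every interval contains the right endpoint of some $s_t$; this immediately gives feasibility for $P_{disj}$ (domination plus separation of every disjoint pair). The bound $|S|\le 2\cdot OPT(P_{disj}^*)$ is a dual-fitting argument: summing the $\alpha-1$ separation constraints of $P_{disj}^*$ for the consecutive disjoint pairs $(s_t,s_{t+1})$ and observing that each variable $x_i^*$ appears in at most two of them yields $\alpha-1\le 2\,OPT(P_{disj}^*)$. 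No total unimodularity is needed.
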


\

\begin{proof}
We construct a set $S$ of intervals in the following way. Initially, set $S=\emptyset$. While $V$ is not empty, do the following: select the interval $v$ that ends first ; put it in $S$ and remove $N[v]$ from $V$. Once $V$ is empty, output $S$.

Observe that this algorithm compute a maximal (with respect to inclusion) independent set $S$ with the property that for every vertex $v\in V$, there exists $s\in S$ such that the end date of $s$ is in the interval $v$ ($s$ is the vertex selected at the same step as $v$ was deleted). We now claim that on the one hand, $S$ is a feasible solution for $P_{disj}$ and on the other hand that $|S|\leq 2\cdot OPT(P_{disj}^*)$.

Let $j<k\in V, jk\notin E$. Up to symmetry, suppose that the interval $j$ starts before $k$ (and thus, ends before $k$ starts). Then there exists $s\in S$ such that the end date of $s$ is in $j$, implying $s\in N[j]\setminus N[k]$. As a maximal independent set, $S$ is also a dominating set so $S$ is a feasible solution for $P_{disj}$ with objective value $\alpha\in \mathbb{N}$.

Let us number $\{s_1, \ldots, s_\alpha\}$ the elements of $S$ by order of interval end date. Then observe that for every $i\in V$, there exist at most two distinct indices $j$ such that $i$ is in $N[s_j] \bigtriangledown N[s_{j+1}]$. Indeed, this happens if and only if $i$ begins between the end date of $s_j$ and the end date of $s_{j+1}$, or $i$ ends between the begin date of $s_j$ and the begin date of $s_{j+1}$. Then consider an optimal solution $(x_1^*, \ldots, x_n^*)$ of $P_{disj}^*$, we can derive:

\[ 2 \cdot OPT(P_{disj}^*)=\sum_{i\in V} 2\cdot x_i^* \geq \sum_{j=1}^{\alpha}\sum_{i\in N[s_j] \bigtriangledown N[s_{j+1}]} x_i^* \geq \alpha \ .\]
\qquad \end{proof}

\

\begin{theorem}\label{thm:interval}
There exists a polynomial time 6-approximation algorithm for \idcprob\ on interval graphs.
\end{theorem}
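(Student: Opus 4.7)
The plan is to combine the two approximation results from Lemma~\ref{P inter} and Lemma~\ref{P disj} through the natural ``take the union'' argument.

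First I would observe that the linear programming relaxation $P^*$ of the full identifying-code IP has only $O(n^2)$ constraints (one separation constraint per pair and one domination constraint per vertex) and $n$ variables, so it can be solved in polynomial time by any standard LP algorithm. In particular, each of the sub-relaxations $P^*_{inter}$ and $P^*_{disj}$ can be solved in polynomial time, since they are obtained from $P^*$ by dropping some constraints.

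Next I would assemble the candidate code. Apply Lemma~\ref{P inter} to a fractional optimum of $P^*_{inter}$ to obtain, in polynomial time, an integer feasible solution $C_{inter}$ of $P_{inter}$ with $|C_{inter}| \le 4 \cdot OPT(P^*_{inter})$. Apply Lemma~\ref{P disj} to obtain, in polynomial time, an integer feasible solution $C_{disj}$ of $P_{disj}$ with $|C_{disj}| \le 2 \cdot OPT(P^*_{disj})$ (note that Lemma~\ref{P disj}'s algorithm is purely combinatorial, but the bound is expressed against $OPT(P^*_{disj})$, which is what we need). Set $C := C_{inter} \cup C_{disj}$.

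Then I would verify feasibility: every intersecting pair $j,k$ is separated by $C_{inter} \subseteq C$, every non-intersecting pair is separated by $C_{disj} \subseteq C$, and every vertex is dominated by $C_{disj} \subseteq C$. Hence $C$ is an identifying code of $G$. Finally, I would bound its size. Since any feasible solution of $P^*$ is feasible for both $P^*_{inter}$ and $P^*_{disj}$, we have $OPT(P^*_{inter}) \le OPT(P^*)$ and $OPT(P^*_{disj}) \le OPT(P^*)$, and since $P^*$ is a relaxation of $P$, also $OPT(P^*) \le OPT(P) = \id(G)$. Combining,
\[
|C| \;\le\; |C_{inter}| + |C_{disj}| \;\le\; 4\,OPT(P^*_{inter}) + 2\,OPT(P^*_{disj}) \;\le\; 6\,OPT(P^*) \;\le\; 6\,\id(G),
\]
which gives the desired $6$-approximation. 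There is essentially no hard step left: both pieces have already been done, and the only thing to check is that the union is feasible for $P$ and that each sub-LP's optimum lower-bounds $\id(G)$.
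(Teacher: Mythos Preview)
Your proposal is correct and follows essentially the same approach as the paper: combine the solutions from Lemma~\ref{P inter} and Lemma~\ref{P disj} by taking their union, note that this union is feasible for $P$, and bound its cost by $4\,OPT(P^*_{inter})+2\,OPT(P^*_{disj})\le 6\,OPT(P^*)\le 6\,OPT(P)$. Your write-up is in fact slightly more explicit than the paper's about why each sub-LP optimum is at most $OPT(P^*)$ and about feasibility of the union, but there is no substantive difference.
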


\

\begin{proof}
   \begin{algorithm}[ht!] \label{algo intervalle}
    \caption{6-Approximation algorithm for \idcprob\ in interval graphs}
  
    \KwIn{An interval graph $G=(V,E)$}
    \KwOut{An identifying code $C$ of size at most $6\cdot OPT$}
    \Begin{Compute an interval representation of $G$\;}
    \Begin(\texttt{ // Computation of the solution for} $P_{inter}$){
      Solve $P_{inter}^*$\;
      Compute $L$ and $R$\;
      Solve $P_L^*$ and $P_R^*$\;
      Set $S_{inter}$ to be the union of the solution to $P_L^*$ and $P_R^*$\;
      }
   
      \Begin(\texttt{ // Computation of the solution for} $P_{disj}$){        
        $S_{disj}=\emptyset$\;
        \While{ $V\neq \emptyset$}{
        Select the interval $v$ that ends first\;
        Add $v$ to $S_{disj}$\;
        Remove $N[v]$ from $V$\;
        }}
        \Return {$C=S_{inter}\cup S_{disj}$}
    
  \end{algorithm}
By Lemmata \ref{P inter} and \ref{P disj}, we can construct in polynomial time a solution $S_{inter}$ for $P_{inter}$ and a solution $S_{disj}$ for $P_{disj}$ of cost respectively at most $4 \cdot OPT( P_{inter}^*)$ and $2 \cdot OPT(P_{disj}^*)$. 
The set $S_{inter}\cup S_{disj}$ gives a feasible solution for $P$ of cost at most $6 \cdot OPT(P^*)$
 thus as most $6 \cdot OPT(P)$. Algorithm \ref{algo intervalle} sums up the different steps of the approximation algorithm. One can check that this algorithm runs in polynomial time since computing an interval representation of an interval graph can be done in linear time \cite{BK} and 
 solving  a linear programming relaxation can also be done in polynomial time using the ellipsoid method for instance.
\qquad \end{proof}

\

Observe that the bound on the cost of the solution is in fact $6 \cdot OPT(P^*)$ which is slightly  tighter than $6 \cdot OPT(P)$.

\bibliographystyle{plain}
\bibliography{biblio}

\end{document}